\begin{document}

\title{Transcript Franking for Encrypted Messaging}
\author{Armin Namavari \and Thomas Ristenpart}

\institute{Cornell Tech}

\maketitle

\thispagestyle{plain}

\def\authnote{1}

\newcommand{\fixme}[1]{\ifnum\authnote=1{\textcolor{red}{[FIXME: #1]}}\fi}
\newcommand{\todo}[1]{\ifnum\authnote=1{\textcolor{red}{[TODO: #1]}}\fi}
\definecolor{orange}{HTML}{CC5500}
\definecolor{purple}{HTML}{4F00E0}
\newcommand{\armin}[1]{\textcolor{blue}{\ifnum\authnote=1{\sffamily (armin: #1)}}\fi}

\newcounter{mynote}[section]
\renewcommand{\thenote}{\thesection.\arabic{mynote}}

\newcommand{\tnote}[1]{{\color{cyan}[TomR: {{#1}} ]}}

\newcommand{\calO}{\mathcal{O}}
\newcommand{\sig}{\sigma}
\newcommand{\msg}{m}
\newcommand{\sk}{\mathit{sk}}
\newcommand{\pk}{\mathit{pk}}
\newcommand{\frank}{\mathsf{Frank}}
\newcommand{\verify}{\mathsf{Verify}}
\newcommand{\judge}{\mathsf{Judge}}
\newcommand{\redactedMsgFrank}{\mathsf{RMF}}
\newcommand{\transcriptFrank}{\mathsf{TF}}
\newcommand{\bits}{\{0, 1\}}
\newcommand{\bit}{b}
\newcommand{\msgPortion}{\hat{m}}
\newcommand{\msgPortionIdx}{I}
\newcommand{\natNum}{\mathbb{N}}
\newcommand{\sendReq}{\mathsf{Send}}
\newcommand{\syncReq}{\mathsf{Sync}}
\newcommand{\ack}{\mathsf{ack}}
\newcommand{\tagLabel}{\mathsf{tag}}
\newcommand{\macTag}{\mathsf{tag}}
\newcommand{\ct}{\mathsf{ct}}
\newcommand{\servTimstamp}{\mathsf{tserv}}
\newcommand{\recTimestamp}{\mathsf{trec}}
\newcommand{\hash}{H}
\newcommand{\mac}{\mathsf{Tag}}
\newcommand{\macScheme}{\mathsf{MAC}}
\newcommand{\keyGen}{\mathsf{KGen}}
\newcommand{\enc}{\mathsf{Enc}}
\newcommand{\dec}{\mathsf{Dec}}
\newcommand{\ackSend}{\mathsf{AckSend}}
\newcommand{\ackSync}{\mathsf{AckSync}}
\newcommand{\getsr}{{\:{\leftarrow{\hspace*{-3pt}\raisebox{.75pt}{$\scriptscriptstyle\$$}}}\:}}
\newcommand{\encKey}{k}
\newcommand{\macKey}{k_{\mathrm{mac}}}
\newcommand{\msgTsLeq}{\leq_\mathsf{ts}}
\newcommand{\msgCtrLeq}{\leq_\mathsf{ctr}}
\newcommand{\sendCtr}{\mathit{cs}}
\newcommand{\syncCtr}{\mathsf{csync}}
\newcommand{\recvCtr}{\mathit{cr}}
\newcommand{\ctrVec}{\mathit{ctrs}}
\newcommand{\concat}{\mathsf{Concat}}
\newcommand{\party}{P}
\newcommand{\partyZ}{\party_0}
\newcommand{\partyO}{\party_1}
\newcommand{\partyB}{\party_b}
\newcommand{\partyI}{\party_i}
\newcommand{\partyJ}{\party_j}
\newcommand{\partyK}{\party_k}
\newcommand{\partyS}{\party_s}
\newcommand{\partyR}{\party_r}
\newcommand{\group}{\mathscr{G}}
\newcommand{\partyOmb}{\party_{1-b}}
\newcommand{\md}{\mathsf{md}}
\newcommand{\groupId}{\mathit{gid}}
\newcommand{\convId}{\mathit{cid}}
\newcommand{\sender}{\mathsf{sender}}
\newcommand{\receiver}{\mathsf{receiver}}
\newcommand{\rBind}{\mathsf{RecBind}}
\newcommand{\sBind}{\mathsf{SendBind}}
\newcommand{\ctrTable}{\mathsf{ctrTable}}
\newcommand{\timeNow}{\mathsf{TimeNow}}
\newcommand{\oracles}{\mathcal{O}}
\newcommand{\causGraph}{\mathcal{C}}
\newcommand{\lastMsg}{\mathsf{LastMsg}}
\newcommand{\invalidGraph}{\mathsf{invalid}}
\newcommand{\commScheme}{\mathsf{CS}}
\newcommand{\commSp}{\mathcal{Q}}
\newcommand{\Com}{\mathsf{Com}}
\newcommand{\Extr}{\mathsf{Extr}}
\newcommand{\omb}{{1 - b}}
\newcommand{\VerC}{\mathsf{VerC}}
\newcommand{\Ver}{\mathsf{Ver}}
\newcommand{\sendTag}{t_s}
\newcommand{\recvTag}{t_r}

\renewcommand{\paragraph}[1]{\vspace*{4pt}\noindent\textbf{#1.}}

\newcommand{\consref}[1]{Construction~\ref{#1}}
\newcommand{\thmref}[1]{Theorem~\ref{#1}}
\newcommand{\lemmaref}[1]{Lemma~\ref{#1}}
\newcommand{\secref}[1]{Section~\ref{#1}}
\newcommand{\appref}[1]{Appendix~\ref{#1}}
\newcommand{\figref}[1]{Figure~\ref{#1}}
\newcommand{\defref}[1]{Definition~\ref{#1}}
\newcommand{\protref}[1]{Protocol~\ref{#1}}

\newcommand{\tfScheme}{\mathsf{TF}}
\newcommand{\otf}{\mathsf{OTF}}
\newcommand{\setup}{\mathsf{Setup}}
\newcommand{\srvInit}{\mathsf{SrvInit}}
\newcommand{\clInit}{\mathsf{Init}}
\newcommand{\initConv}{\mathsf{InitConv}}
\newcommand{\genSendRequest}{\mathsf{GenSendReq}}
\newcommand{\procSend}{\mathsf{ProcessSend}}
\newcommand{\procSync}{\mathsf{ProcessSync}}
\newcommand{\reportMsgs}{\mathsf{ReportMsgs}}
\newcommand{\verifyMsg}{\mathsf{VerifyMsg}}
\newcommand{\verifyReport}{\mathsf{Judge}}
\newcommand{\compareMsgs}{\mathsf{CmpMsgs}}
\newcommand{\gapSize}{\mathsf{GapSize}}
\newcommand{\isContig}{\mathsf{IsContig}}
\newcommand{\reportInfo}{\rho}
\newcommand{\req}{\mathsf{req}}
\newcommand{\preSendReq}{\mathsf{preSendReq}}
\newcommand{\postSendReq}{\mathsf{postSendReq}}
\newcommand{\postSyncReq}{\mathsf{postSyncReq}}
\newcommand{\st}{\mathit{st}}
\newcommand{\servSt}{\st_S}
\newcommand{\user}{u}
\newcommand{\msgVec}{\mathsf{msgs}}
\newcommand{\reportTag}{\mathsf{tag}}
\newcommand{\advA}{\mathcal{A}}
\newcommand{\advB}{\mathcal{B}}
\newcommand{\advC}{\mathcal{C}}
\newcommand{\advD}{\mathcal{D}}
\newcommand{\advantage}{\mathrm{Adv}}
\newcommand{\advantageEUFCMA}{\advantage^{\textnormal{euf-cma}}}
\newcommand{\advantageVBind}{\advantage^{\textnormal{v-bind}}}
\newcommand{\advantageRBind}{\advantage^{\textnormal{r-bind}}}
\newcommand{\advantageSR}{\advantage^{\textnormal{sr-bind}}}
\newcommand{\advantageSBind}{\advantage^{\textnormal{s-bind}}}
\newcommand{\advantageConf}{\advantage^{\textnormal{conf}}}
\newcommand{\advantageDen}{\advantage^{\textnormal{den}}}
\newcommand{\advantageTrInt}{\advantage^{\textnormal{tr-int}}}
\newcommand{\advantageOTrInt}{\advantage^{\textnormal{o-tr-int}}}
\newcommand{\advantageRep}{\advantage^{\textnormal{tr-rep}}}
\newcommand{\advantageORep}{\advantage^{\textnormal{o-tr-rep}}}
\newcommand{\advantageOFrame}{\advantage^{\textnormal{o-fr}}}
\newcommand{\advantageCorr}{\advantage^{\textnormal{corr}}}
\newcommand{\game}{\mathbf{G}}
\newcommand{\gameSBind}{\game^{\textnormal{rcp-s}}}
\newcommand{\gameRBind}{\game^{\textnormal{rcp-r}}}
\newcommand{\gameSR}{\game^{\textnormal{rcp-sr}}}
\newcommand{\gameTrInt}{\game^{\textnormal{tr-int}}}
\newcommand{\gameOTrInt}{\game^{\textnormal{o-tr-int}}}
\newcommand{\gameOFrame}{\game^{\textnormal{o-fr}}}
\newcommand{\gameRep}{\game^{\textnormal{tr-rep}}}
\newcommand{\gameORep}{\game^{\textnormal{o-tr-rep}}}
\newcommand{\gameCorr}{\game^{\textnormal{corr}}}
\newcommand{\gameConf}{\game^{\textnormal{conf}}}
\newcommand{\gameChanCaus}{\game^{\textnormal{chan-cp}}}
\newcommand{\SendOracle}{\mathbf{Send}}
\newcommand{\RecvOracle}{\mathbf{Recv}}
\newcommand{\TagSendOracle}{\mathbf{TagSend}}
\newcommand{\TagRecvOracle}{\mathbf{TagRecv}}
\newcommand{\SendTagOracle}{\mathbf{SendTag}}
\newcommand{\RecvTagOracle}{\mathbf{RecvTag}}
\newcommand{\SyncOracle}{\mathsf{Sync}}
\newcommand{\ReportOracle}{\mathbf{Rep}}
\newcommand{\ReportReplayOracle}{\mathbf{RepReplay}}
\newcommand{\clen}{\mathsf{clen}}
\newcommand{\queriedCt}{\mathcal{Y}}
\newcommand{\ChalSendOracle}{\mathbf{ChalSend}}
\newcommand{\Init}{\mathsf{Init}}
\newcommand{\userSet}{U}
\newcommand{\requestList}{\mathsf{reqList}}
\newcommand{\ctr}{\mathsf{ctr}}
\newcommand{\ctxt}{\mathit{ctxt}}
\newcommand{\keySp}{\mathcal{K}}
\newcommand{\tagSp}{\mathcal{T}}
\newcommand{\msgSp}{\mathcal{M}}
\newcommand{\ctSp}{\mathcal{C}}
\newcommand{\stSp}{\mathcal{S}}
\newcommand{\setelems}[1]{\{#1\}}
\newcommand{\channel}{\mathsf{Ch}}
\newcommand{\channelSt}{\st_\channel}
\newcommand{\init}{\mathsf{Init}}
\newcommand{\srvInitConv}{\mathsf{SrvInitConv}}
\newcommand{\tagSend}{\mathsf{TagSend}}
\newcommand{\tagRecv}{\mathsf{TagRecv}}
\newcommand{\tagStart}{\mathsf{TagStart}}
\newcommand{\srvTag}{\mathsf{SrvTag}}
\newcommand{\reportReplay}{\mathsf{JudgeReplay}}
\newcommand{\judgeReplay}{\mathsf{JudgeReplay}}
\newcommand{\Snd}{\mathsf{Snd}}
\newcommand{\Rcv}{\mathsf{Rcv}}
\newcommand{\Rprt}{\mathsf{Rprt}}
\newcommand{\GenReport}{\mathsf{GenReport}}
\newcommand{\indexVec}{\mathit{idxs}}
\newcommand{\VerifyReport}{\mathsf{VerifyReport}}
\newcommand{\serverKey}{k_\mathsf{Srv}}
\newcommand{\channelKey}{k_\mathsf{Ch}}
\newcommand{\sentMessages}{\mathcal{R}}
\newcommand{\R}{\mathcal{R}}
\newcommand{\tagggedMessages}{\mathcal{R}_t}
\newcommand{\recvMessages}{\mathcal{R}_r}
\newcommand{\Rf}{\mathcal{R}_f}
\newcommand{\localG}{\mathsf{localG}}
\newcommand{\mfcFBCaus}{\mathsf{MFCh}_\mathrm{cFB}}
\newcommand{\mfc}{\mathsf{MFCh}}
\newcommand{\mfcAck}{\mathsf{MFCh}_\mathrm{ack}}
\newcommand{\barP}{{\bar{P}}}
\newcommand{\winFlag}{\mathsf{win}}
\newcommand{\sendOp}{\texttt{S}}
\newcommand{\recvOp}{\texttt{R}}
\newcommand{\vertexSpace}{\mathcal{V}}
\newcommand{\edgeSpace}{\mathcal{E}}
\newcommand{\Z}{\mathbb{Z}}

\newcommand{\cif}{\mathbf{if\;}}
\newcommand{\cthen}{\mathbf{\;then\;}}
\newcommand{\celse}{\mathbf{else\;}}
\newcommand{\creturn}{\mathbf{return\;}}
\newcommand{\ctrue}{\mathsf{true}}
\newcommand{\cfalse}{\mathsf{false}}
\newcommand{\cbad}{\mathsf{bad}}
\newcommand{\cflag}{\mathsf{flag}}
\newcommand{\sets}{\mathsf{sets\;}}

\newcommand{\ind}{\hspace*{1.5em}}
\newcommand{\indsm}{\hspace*{.75em}}
\newcommand{\indeqn}{\;\;\;\;\;\;\;}

\newcommand{\query}[1]{\procfont{query} {#1}:}
\newcommand{\queryl}[1]{\underline{\procfont{query} {#1}:}}
\newcommand{\procedure}[1]{\underline{\procfont{procedure} {#1}:}}
\newcommand{\procedurev}[1]{\underline{{#1}:}\smallskip}
\newcommand{\subroutine}[1]{\underline{\procfont{subroutine} {#1}:}}
\newcommand{\subroutinev}[1]{\underline{\procfont{subroutine} {#1}:}\smallskip}
\newcommand{\subroutinenl}[1]{{\procfont{subroutine} {#1}:}}
\newcommand{\subroutinenlv}[1]{{\procfont{subroutine} {#1}:}\smallskip}
\newcommand{\adversary}[1]{\underline{\procfont{adversary} {#1}:}}
\newcommand{\adversaryv}[1]{\underline{\procfont{adversary} {#1}:}\smallskip}
\newcommand{\experiment}[1]{\underline{{#1}}}
\newcommand{\experimentv}[1]{\underline{{#1}}\smallskip}
\newcommand{\algorithm}[1]{\underline{\procfont{algorithm} {#1}:}}
\newcommand{\algorithmv}[1]{\underline{\procfont{algorithm} {#1}:}\smallskip}

\newcommand{\gamesfontsize}{\scriptsize}
\newcommand{\stretchval}{1.2}

\newcommand{\mpage}[2]{\begin{minipage}{#1\textwidth} #2 \end{minipage}}
\newcommand{\fpage}[2]{\framebox{\begin{minipage}{#1\textwidth}\setstretch{\stretchval}\gamesfontsize #2 \end{minipage}}}
\newcommand{\opage}[2]{\begin{minipage}{#1\textwidth}\setstretch{\stretchval}\gamesfontsize #2 \end{minipage}}

\newcommand{\fhpages}[3]{
    \framebox{
        \begin{tabular}{cc}
            \begin{minipage}[t]{#1\textwidth}\setstretch{\codestretch}\gamesfontsize #2 \end{minipage}
             &
            \begin{minipage}[t]{#1\textwidth}\setstretch{\codestretch}\gamesfontsize #3 \end{minipage}
        \end{tabular}
    }
}
\newcommand{\hfpages}[3]{\hfpagess{#1}{#1}{#2}{#3}}
\newcommand{\hfpagess}[4]{
    \begin{tabular}{c@{\hspace*{.5em}}c}
        \framebox{\begin{minipage}[t]{#1\textwidth}\setstretch{\stretchval}\gamesfontsize #3 \end{minipage}}
         &
        \framebox{\begin{minipage}[t]{#2\textwidth}\setstretch{\stretchval}\gamesfontsize #4 \end{minipage}}
    \end{tabular}
}
\newcommand{\hfpagesss}[6]{
    \begin{tabular}{c@{\hspace*{.5em}}c@{\hspace*{.5em}}c}
        \framebox{\begin{minipage}[t]{#1\textwidth}\setstretch{\stretchval}\gamesfontsize #4 \end{minipage}}
         &
        \framebox{\begin{minipage}[t]{#2\textwidth}\setstretch{\stretchval}\gamesfontsize #5 \end{minipage}}
         &
        \framebox{\begin{minipage}[t]{#3\textwidth}\setstretch{\stretchval}\gamesfontsize #6 \end{minipage}}
    \end{tabular}
}
\newcommand{\hfpagessss}[8]{
    \begin{tabular}{c@{\hspace*{.5em}}c@{\hspace*{.5em}}c@{\hspace*{.5em}}c}
        \framebox{\begin{minipage}[t]{#1\textwidth}\setstretch{\stretchval}\gamesfontsize #5 \end{minipage}}
         &
        \framebox{\begin{minipage}[t]{#2\textwidth}\setstretch{\stretchval}\gamesfontsize #6 \end{minipage}}
         &
        \framebox{\begin{minipage}[t]{#3\textwidth}\setstretch{\stretchval}\gamesfontsize #7 \end{minipage}}
         &
        \framebox{\begin{minipage}[t]{#4\textwidth}\setstretch{\stretchval}\gamesfontsize #8 \end{minipage}}
    \end{tabular}
}

\newcommand{\hfpagesssss}[6]{
    \begin{tabular}{c@{\hspace*{.5em}}c@{\hspace*{.5em}}c@{\hspace*{.5em}}c@{\hspace*{.5em}}c}
        \framebox{\begin{minipage}[t]{#1\textwidth}\setstretch{\stretchval}\gamesfontsize #2 \end{minipage}}
         &
        \framebox{\begin{minipage}[t]{#1\textwidth}\setstretch{\stretchval}\gamesfontsize #3 \end{minipage}}
         &
        \framebox{\begin{minipage}[t]{#1\textwidth}\setstretch{\stretchval}\gamesfontsize #4 \end{minipage}}
         &
        \framebox{\begin{minipage}[t]{#1\textwidth}\setstretch{\stretchval}\gamesfontsize #5 \end{minipage}}
         &
        \framebox{\begin{minipage}[t]{#1\textwidth}\setstretch{\stretchval}\gamesfontsize #6 \end{minipage}}
    \end{tabular}
}

\newcommand{\fhpagesss}[6]{
    \framebox{
        \begin{tabular}{c@{\hspace*{.5em}}c@{\hspace*{.5em}}c}
            \begin{minipage}[t]{#1\textwidth}\setstretch{\stretchval}\gamesfontsize #4 \end{minipage}
             &
            \begin{minipage}[t]{#2\textwidth}\setstretch{\stretchval}\gamesfontsize #5 \end{minipage}
             &
            \begin{minipage}[t]{#3\textwidth}\setstretch{\stretchval}\gamesfontsize #6 \end{minipage}
        \end{tabular}
    }
}

\def\codestretch{\stretchval}

\newcommand{\hpagesl}[3]{
    \begin{tabular}{c|c}
        \begin{minipage}{#1\textwidth}\setstretch{\codestretch} #2 \end{minipage}
         &
        \begin{minipage}{#1\textwidth} #3 \end{minipage}
    \end{tabular}
}

\newcommand{\hpagessl}[4]{
    \begin{tabular}{c|@{\hspace*{.5em}}c}
        \begin{minipage}[t]{#1\textwidth}\setstretch{\codestretch} #3 \end{minipage}
         &
        \begin{minipage}[t]{#2\textwidth}\setstretch{\codestretch} #4 \end{minipage}
    \end{tabular}
}

\newcommand{\hpages}[3]{
    \begin{tabular}{cc}
        \begin{minipage}[t]{#1\textwidth}\setstretch{\codestretch} #2 \end{minipage}
         &
        \begin{minipage}[t]{#1\textwidth}\setstretch{\codestretch} #3 \end{minipage}
    \end{tabular}
}

\newcommand{\hpagess}[4]{
    \begin{tabular}{c@{\hspace*{1.5em}}c}
        \begin{minipage}[t]{#1\textwidth}\setstretch{\codestretch} #3 \end{minipage}
         &
        \begin{minipage}[t]{#2\textwidth}\setstretch{\codestretch} #4 \end{minipage}
    \end{tabular}
}

\newcommand{\hpagesss}[6]{
    \begin{tabular}{ccc}
        \begin{minipage}[t]{#1\textwidth}\setstretch{\codestretch}\gamesfontsize #4 \end{minipage} &
        \begin{minipage}[t]{#2\textwidth}\setstretch{\codestretch}\gamesfontsize #5 \end{minipage} &
        \begin{minipage}[t]{#3\textwidth}\setstretch{\codestretch}\gamesfontsize #6 \end{minipage}
    \end{tabular}}
\newcommand{\hpagesssl}[6]{
    \begin{tabular}{c|c|c}
        \begin{minipage}[t]{#1\textwidth}\setstretch{\codestretch} #4 \end{minipage} &
        \begin{minipage}[t]{#2\textwidth}\setstretch{\codestretch} #5 \end{minipage} &
        \begin{minipage}[t]{#3\textwidth}\setstretch{\codestretch} #6 \end{minipage}
    \end{tabular}}
\newcommand{\hpagessss}[8]{
    \begin{tabular}{cccc}
        \begin{minipage}[t]{#1\textwidth}\setstretch{\codestretch} #5 \end{minipage} &
        \begin{minipage}[t]{#2\textwidth}\setstretch{\codestretch} #6 \end{minipage} &
        \begin{minipage}[t]{#3\textwidth}\setstretch{\codestretch} #7 \end{minipage}
        \begin{minipage}[t]{#4\textwidth}\setstretch{\codestretch} #8 \end{minipage}
    \end{tabular}}

\begin{abstract}
    Message franking is an indispensable abuse mitigation tool for end-to-end
    encrypted (E2EE) messaging platforms. With it, users who receive harmful
    content can securely report that content to platform moderators. However,
    while real-world deployments of reporting require the disclosure of multiple
    messages,  existing treatments of message franking only consider the report
    of a single message. As a result, there is a gap between the security goals
    achieved by constructions and those needed in practice.
    
    Our work introduces \emph{transcript franking}, a new type of protocol that
    allows reporting subsets of conversations such that moderators can
    cryptographically verify message causality and contents. We define syntax,
    semantics, and security for transcript franking in two-party and group
    messaging. We then present efficient constructions for transcript franking
    and prove their security. Looking toward deployment considerations, we
    provide detailed discussion of how real-world messaging systems can
    incorporate our protocols.
\end{abstract}

\section{Introduction}
\label{sec:intro}

End-to-end encrypted (E2EE) messaging is used by billions of people through
platforms like Whatsapp, Signal, and iMessage \cite{whatsapp-users}. As a
result, users enjoy strong security and privacy protections even in the face of
messaging platform compromise by malicious insiders, remote attackers, or
government overreach. Abuse, hate, and harassment, however, are not prevented or
mitigated by encryption, and encrypted messaging platforms are used to spread
misinformation, incitements of violence, and illegal content
\cite{whatsapp-misinfo}. As a result, a rapidly growing body of work has sought
to provide trust and safety features for encrypted messaging, without
diminishing its privacy benefits \cite{sokContentModeration}.

One important line of work targets secure reporting of abusive  
messages (see~\cite{sokContentModeration}). When users receive harmful content,
they can report it to the platform, which can in turn take appropriate action
against the user that sent the problematic content. Such user-driven reporting
features are widespread on plaintext platforms and play an instrumental role in
content moderation across a wide range of abuse types~\cite{contentObliv}. In
encrypted messaging, however, the platform cannot trivially verify that a report
corresponds to a transmitted message. 

Facebook's message franking feature~\cite{fbSecretConv} was the first to target
cryptographically verifiable abuse reports. Message franking targets not
compromising the confidentiality of unreported messages, and preventing attacks
that undermine the trustworthiness of reporting: users should not be able to
report messages that were not sent, nor be able to send messages that cannot be
reported. These security properties were first formalized in~\cite{smfComm} as
receiver binding and sender binding, respectively. While Facebook's first design
had a sender binding vulnerability~\cite{invisSalamanders}, we now have message
franking protocols with strong assurance in their
security~\cite{invisSalamanders,smfComm,msgFrChan}. Subsequent work extended to
provide asymmetric message franking schemes (AMFs)~\cite{amf,hecate} for
two-party sender-anonymous messaging, group AMFs~\cite{groupAMF}, franking for
two-party channels~\cite{msgFrChan}, and message franking that allows only
revealing parts of messages~\cite{leontiadis2023private}.

All those treatments of message franking only support reporting individual
messages. In practice, however, moderators typically need visibility into more
of a conversation to make judgements, and indeed existing abuse reporting
workflows do report surrounding messages when one is
reported~\cite{smfComm,whatsappReport}. Recent work~\cite{reportingHCI} reports
that users would find it useful to have more agency in specifying what portions
of their private conversations are disclosed to moderators, which is not
something current approaches offer. 

Despite this, to date, there has been no attempt to show how to provide
cryptographically verifiable reporting of multi-message conversations.
Near-at-hand approaches, including those used in practice, do not provide a
satisfying level of security. Consider reporting with message franking: each
individual message can be verified along with a platform timestamp of when it
was sent. But a malicious client can simply undetectably omit messages from a
report. For example consider the conversation between Alice and Bob shown in
\figref{fig:message_order}. If Alice reports the conversation with omission of
$m_1$ it blocks moderators from interpreting Bob's message $m_3$ as replying to
$m_1$ rather than mocking Alice's loss. A more subtle issue is that even
high-precision timestamps do not establish strict causal
ordering~\cite{lamportClock}. Let $m < m'$ represent that~$m$ was received
before~$m'$ was sent. Returning to the figure, it could be that $m_2 < m_3$, or
it could be that $m_3$ was sent before Bob received $m_2$. The result in the
latter case would be Bob's view of the conversation being different from
Alice's.

\def\w{4} \def\h{1} \def\rightShift{9.7} \def\tw{3.5} \def\roundVal{7}
\def\sep{0.1}

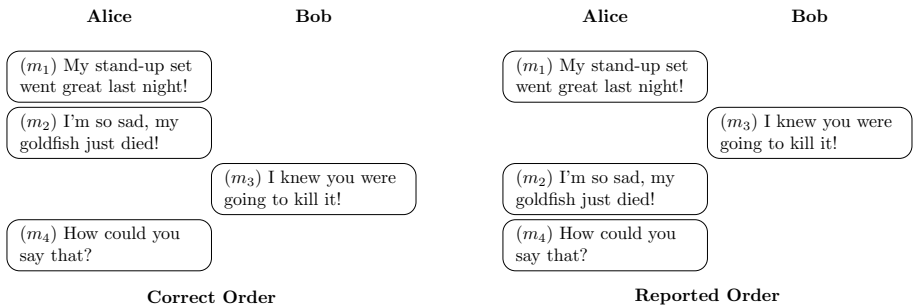
\begin{figure}[t]
    \centering
    \resizebox{\textwidth}{!}{
    \begin{tikzpicture}
        \node at (0.5 * \w cm, 5 * \h) {\bf Alice}; \node at (1.5 * \w cm, 5 *
        \h) {\bf Bob}; \node at (\w cm, -0.5 * \h) {\bf Correct Order};

        \draw [rounded corners = \roundVal] (0 cm, 0 cm) rectangle node[text
        width=\tw cm, align=left]{$(m_4)$ How could you say that?} (\w cm,\h cm);
        
        \draw [rounded corners = \roundVal, yshift = \sep cm] (\w cm,\h cm)
     rectangle node[text width=\tw cm, align=left]{ $(m_3)$ I knew you
     were going to kill it!} (2*\w cm,2*\h cm);

        \draw [rounded corners = \roundVal, yshift = 2*\sep cm] (0 cm,2*\h cm)
        rectangle node[text width=\tw cm, align=left]{ $(m_2)$ I'm so sad,
        my goldfish just died!} (\w cm,3*\h cm);

        \draw [rounded corners = \roundVal, yshift = 3*\sep cm] (0 cm,3*\h cm)
        rectangle node[text width=\tw cm, align=left]{ $(m_1)$ My stand-up
        set went great last night!} (\w cm,4*\h cm);

        \node [xshift = \rightShift cm] at (0.5 * \w cm, 5 * \h) {\bf Alice};
        \node [xshift = \rightShift cm] at (1.5 * \w cm, 5 * \h) {\bf Bob};
        \node [xshift = \rightShift cm] at (\w cm, -0.5 * \h) {\bf Reported Order};

        \draw [rounded corners = \roundVal, xshift = \rightShift cm] (0 cm, 0
        cm) rectangle node[text width=\tw cm, align=left]{ $(m_4)$ How could you
        say that?}
        (\w cm,\h cm);
        
        \draw [rounded corners = \roundVal, xshift = \rightShift cm, yshift =
     2*\sep cm] (\w cm,2*\h cm) rectangle node[text width=\tw cm,
     align=left]{ $(m_3)$ I knew you were going to kill it!} (2*\w cm,3*\h
     cm);

        \draw [rounded corners = \roundVal, xshift = \rightShift cm, yshift =
        \sep cm] (0 cm,\h cm) rectangle node[text width=\tw cm,
        align=left]{ $(m_2)$ I'm so sad, my goldfish just died!} (\w
        cm,2*\h cm);

        \draw [rounded corners = \roundVal, xshift = \rightShift cm, yshift =
        3*\sep cm] (0 cm,3*\h cm) rectangle node[text width=\tw cm,
        align=left]{ $(m_1)$ My stand-up set went great last night!} (\w
        cm,4*\h cm);

    \end{tikzpicture}
    }

    \label{fig:message_order}
    \caption{An example conversation in which message ordering impacts
    interpretation. A report of this conversation should confirm for moderators
    the causal ordering. }
\end{figure}

The problem of conversation ordering and moderation has been known to
practitioners since at least 2014~\cite{signalBlogIceCream}, but only recently
has there been a first effort to address it by Chen and
Fischlin~\cite{causalityPres}. They propose a message franking protocol,
$\mfcFBCaus$, in which clients report observed message ordering via franking
metadata alongside content. While their approach provides a novel augmentation
of message franking with causality, they stop short of providing a fleshed out
solution for multi-message franking. Their security modeling considers reporting
individual received messages. Meanwhile, allowing reporters to disclose sent
messages is crucial to multi-message reporting with full context. One could
consider a natural extension of $\mfcFBCaus$ in which the report function is
invoked for each message and both parties are involved in disclosing each
other's messages. This, however, is susceptible to denial of service as the
abusive party can simply refuse to cooperate and go offline, withholding crucial
context.

Moreover, $\mfcFBCaus$ is susceptible to integrity attacks. Since causality data
is client-generated, malicious clients can provide incorrect information about
message ordering. Consider the example conversation in
\figref{fig:message_order}. We demonstrate an attack in which Bob makes Alice
observe the view on the left while making it so that Alice can only report the
view on the right. Bob simply sends $m_3$ after having received $m_2$ but
attaches causality metadata to $m_3$ indicating he has seen only $m_1$. Hence,
Bob can force Alice to observe an upsetting message ordering while making seem
as if it were due to the network reordering messages.

\paragraph{Our contributions} We suggest a new approach that we call
\emph{transcript franking}. This cryptographic protocol goal allows users to
report some or all of two-party or group conversations with stronger security
guarantees about message ordering and omissions. We define the syntax and
semantics of a transcript franking scheme and provide formal security
definitions. We go on to detail transcript franking schemes for both the
two-party and group messaging settings; our schemes are practical to deploy and
avoid issues like those above, clarifying in a single report all relevant
information about the messages reported, including causal ordering. To do so,
our schemes deviate from Chen-Fischlin's approach of using client-provided
causality metadata, instead taking advantage of the fact that the platform can
establish causal ordering over ciphertexts and check that reports are consistent
with it. We prove that our new schemes meet our new security goals under
standard assumptions. 

We treat both two-party (direct) and group messaging settings; we explain
further each, starting with the former.

\paragraph{Two-party transcript franking} We start with the two-party case,
where only two clients are involved in a conversation (sometimes called direct
messaging). Our goal is to enable either participant in the encrypted
conversation to report all or part of the conversation to the platform. Our
starting point is symmetric message franking (SMF)~\cite{smfComm,msgFrChan}, in
which the clients encrypt messages using committing authenticated encryption
with associated data (AEAD) and the platform MACs (portions of) the resulting
ciphertexts to produce a reporting tag. To report a message, the ciphertext,
secret encryption key, and the reporting tag are all sent to the provider. 

In this setting where we rely on fast symmetric primitives, the veracity of a
ciphertext can only be checked by those who have access to the secret
keys---before reporting, just the two clients. Prior work on SMFs took this to
mean that one cannot support reporting a sent message, since the reporter could
be unreliable. But if we want to allow reporting more of a conversation, we must
support this. Intuitively, our approach will be to leverage acknowledgements of
received ciphertexts to register their validity for their use in reporting.

To do so, we first enrich the architectural model to surface how the provider
manages sending and receiving events separately. This more accurately captures
the asynchronous nature of messaging.  Formally, the platform is represented by
a pair of algorithms $\tagSend$ and $\tagRecv$.  We allow the platform to
maintain per-conversation state; we also give a way to outsource this state
securely to clients. A client sends a message by running an algorithm~$\Snd$ and
submitting the resulting ciphertext to the platform by calling~$\tagSend$. To
receive a message, a recipient client runs an algorithm $\Rcv$ on it, and if it
accepts the message, indicates so to the server, which then calls $\tagRecv$.
The latter readies a cryptographic reception acknowledgement for both the sender
and the receiver.  

A set of messages can be reported by submitting their ciphertexts, their
corresponding secret keys, and platform-provided reporting tags. The moderator
can process them via an algorithm~$\verifyReport$ that verifies them, and
produces a causal graph providing the moderator with: (1) a partial ordering
over all messages that implies a total ordering over messages sent in either
direction; and (2) indication of gaps---unreported messages sent between
reported messages.

We require that transcript franking schemes provide confidentiality and
integrity of messages that aren't reported in the face of a malicious provider.
More complex is the security of reporting, in which the platform is trusted, but
clients are not. Here we formalize two security goals for transcript franking
schemes, which can be viewed as strengthening SMF sender and receiver binding.
Coming up with definitions that capture transcripts, rather than individual
messages, proved challenging.  For example, any given conversation can now give
rise to all sorts of possible reports: the entire conversation or any subset of
that conversation, including possibly just a single message. 

Our first definition is \emph{transcript reportability}. Here the security game
tasks an adversary controlling one client with interacting with the provider and
some other honest client. The adversary attempts to generate a message
transcript such that the honest client cannot successfully report some
adversarially-chosen subset of messages. In the case where only a single message
is reported, this coincides with sender binding, but it goes much beyond since
it requires that any subset of the conversation be reportable, including
messages sent by the reporter. 

The second main security notion is \emph{transcript integrity}. Intuitively we
want to not only ensure that no maliciously generated report can frame someone
as having sent something when they have not, but, moreover, ensure that the
reported transcript does not lie about ordering or omissions. 
Perhaps counterintuitively, we increase the adversarial power over prior
treatments of receiver binding by allowing the adversary to control \emph{both}
clients in a conversation to arbitrarily deviate from the protocol. The
adversary has oracles to send ciphertexts to the platform and register having
received them. The game keeps track of a ground truth graph of message
transmission. This matches the view of the platform in terms of sending and
receiving event orderings for each transmitted ciphertext. Then, the adversary's
goal is to come up with either a report that generates a causality graph
inconsistent with the ground truth graph, or two reports that are inconsistent
with each other. We define consistency relative to our causality graph
abstraction.

\paragraph{The quad-counter construction} We now turn to constructions.  We want
to ensure practicality, using fast cryptographic mechanisms and avoiding
unrealistic storage constraints. The main underlying idea is to have the
platform carefully attest to the observed causal order of ciphertexts.  Since we
allow stateful platforms, we could of course just store a log of all sending and
reception events that occurred, with their corresponding ciphertexts. But this
would be prohibitively expensive in terms of storage. Instead, we can use MAC'd
counters of events. Namely, for each party we have a pair of counters, a sending
counter incremented each time that party sends a ciphertext, and a reception
counter incremented each time a ciphertext is registered as received. The
platform generates a cryptographic acknowledgement for each send and receive
event: a tuple including the party identifiers, the type of event (send or
receive), a binding commitment to the ciphertext, and the current counters,
together with a MAC over the tuple using a platform-held secret key.
Cryptographic acknowledgements are shared with both parties. The platform need
only retain the four counters, hence the name of this protocol as the
quad-counter construction (QCC). 

A reporter can choose any subset of messages, and submits the identity of the
sender, the message itself, the cryptographic commitment to the message, and
both the sending and reception cryptographic acknowledgements. The platform can
then verify each tuple, and construct a causality graph that indicates the
precise causality order of the reported messages plus how many send and receive
events were omitted from the report.

We show that, under standard assumptions on the MAC and commitment scheme, we
achieve transcript integrity.

\paragraph{Extensions} The quad-counter construction readily extends to work in
group messaging settings, by having a sending and reception counter for each
party (for a total of~$2n$ counters where $n$ is the number of participants).
The cryptographic acknowledgements are otherwise similar to the two-party case.
We formalize this setting, showing how our definitions and results handle it
securely. 

One potential downside of the quad-counter construction and its generalization,
as compared to in-use (but insecure) approaches like timestamps, is that the
platform must now keep per-conversation state. While this state is tiny, it
would be better to avoid, due to it complicating large-scale deployments where
one would prefer to have platform servers stateless and able to load balance any
message event across any server without having to synchronize state. We describe
an extension to our approach which outsources the state to the clients, at the
cost of slightly extra data being sent to the platform, and relying on honest
users to report replay of cryptographic acknowledgements. See
\secref{sec:discussion} for details.

\section{Preliminaries}
\label{sec:prelim}
\subsection{General Notation and Primitives}
Let $\Z^*$ denote the non-negative integers $\setelems{0, 1, \ldots}$. To
indicate the range of elements $\setelems{0, 1, \ldots, N-1}$, we use the
shorthand $[N]$. The symbol $\lambda \in \Z^*$ denotes the security parameter.
For an adversary $\advA$ in a game $\game$, we use $\Pr[\game(\advA) \Rightarrow
x]$ to denote the probability that the $\game$ outputs $x$ when run with
adversary $\advA$. With a tuple, we refer to its elements via 0-indexed
positions or names. For instance, $c.x$ or $c[0]$ refer to the first element of
the tuple $c = (x, y, z)$. Multiple indexing is also allowed: $c.(x, z)$
extracts the first and third elements of the tuple, while $c[1:2]$ extracts the
last two elements (for ranges, both the start and end indices are inclusive).
We indicate structs as sets of variables, e.g., $s \gets \{x, y, z\}$. When $s$
is in scope, we allow accessing $x$ directly in order to simplify notation.
The notation $d \gets D(x)$ indicates obtaining the output $d$ from a
deterministic algorithm $D$, when fed input $x$. For a randomized algorithm $R$,
we write $r \getsr R(x)$ to denote obtaining the output $r$ from the input $x$.
We utilize the terms \emph{algorithm}, \emph{routine}, and \emph{procedure}
interchangeably.

\paragraph{Bidirectional channels}
To model two-party communication, we make use of a bidirectional channel
abstraction, borrowing syntax from \cite{causalityPres}. A bidirectional channel
$\channel$ consists of three procedures $\channel = (\init, \Snd, \Rcv)$,
defined over a key space $\keySp$, a message space $\msgSp$, a party space
$\bits$, a ciphertext space $\ctSp$, and a state space $\stSp$. The variable $P$
and the labels $\setelems{0, 1}$ are used to indicate the two parties. Let the
notation $\barP$ be shorthand for $1 - P$. We elaborate on these procedures
below.
\begin{itemize}
    \item $\st \gets \init(P, k)$ generates initial channel state $\st \in \stSp$ for
    party $P \in \bits$ with key $k \in \keySp$.
    \item $\st', c \getsr \Snd(P, \st, m)$ produces a ciphertext $c \in \ctSp$
    from party $P \in \bits$ for plaintext $m \in \msgSp$ and client state $\st$,
    yielding updated state $\st'$.
    \item $\st', m, i \gets \Rcv(P, \st, c)$ decrypts a ciphertext $c \in
    \ctSp$, received by the party $P \in \bits$, to plaintext $m \in \msgSp$
    along with a sending index $i \in \Z^*$.
\end{itemize}

Correctness of a channel requires that all honestly sent messages can be
successfully received with the correct sending index. For security, we expect
standard confidentiality and integrity properties. See \cite{causalityPres} for
more details.

\paragraph{Message authentication code} A message authentication code (MAC)
consists of the algorithms $\macScheme = (\keyGen, \mac, \verify)$ defined over a key space
$\keySp$, a message space $\msgSp$, and a tag space $\tagSp$. The key generation
procedure $\keyGen$ outputs a random key $k \in \keySp$. The procedure $\mac(k,
m)$ outputs a tag $t \in \tagSp$ for a message $m \in \msgSp$. To verify a tag
$t$ on a message $m$, one runs the procedure $\verify(k, m, t)$, which outputs
$b \in \bits$. An output of 1 indicates a valid tag on the message while an
output of 0 indicates an invalid tag. For a MAC to be considered secure, it has
to satisfy existential unforgeability under a chosen message attack (EUF-CMA).
This means that, when given oracle access to $\mac(k, \cdot)$ and $\verify(k,
\cdot, \cdot)$, for $k \getsr \keyGen()$, an adversary $\advA$ has a negligible
probability of producing $(m, t)$, where $m$ is not previously queried to
$\mac$, that passes the verification check. We denote this probability as the
advantage $\advantageEUFCMA_{\macScheme}(\advA)$.

\paragraph{Commitment scheme}
A commitment scheme $\commScheme$ consists of a pair of algorithms $(\Com,
\VerC)$ defined over a message space $\msgSp$, a key space $\keySp$, and a
commitment space $\commSp$. The randomized algorithm $\Com(m)$ outputs a pair
$(k, c) \in \keySp \times \commSp$, where $c$ is the commitment and $k$ is a key
that allows opening the commitment to the original message $m \in \msgSp$. In
terms of security, a commitment scheme is often required to be \emph{hiding} and
\emph{binding}. To satisfy the hiding property, the commitment $c$ must reveal
nothing about the message $m$. We formalize this via a real-or-random notion
that requires that no efficient adversary can distinguish a commitment from a
random bit-string of the same length. The binding property requires that an
adversary $\advA$ has a negligible probability in producing a tuple $(m, k, m',
k', c)$, where $m \neq m'$, $\VerC(m, k, c) = 1$, and $\VerC(m', k', c) = 1$.

\paragraph{Message franking}
User-driven reporting for end-to-end encrypted messaging is captured by message
franking \cite{smfComm,fbSecretConv,invisSalamanders}. For now, we focus our
attention on two-party conversations between users. For the sake of notational
simplicity, we elide associated metadata that clients or the server may
associate with the message (e.g., timestamps). In accordance with our usage of
messaging channels, we draw on message franking channels \cite{msgFrChan}. We
opt for the syntax used in Chen and Fischlin's work \cite{causalityPres}. The
procedure $\Rcv(\party, \st, c)$ outputs $\st', m, k_f, i$, where $k_f$ is a key
opening the commitment $c.c_f$, which is stored as part of the ciphertext $c$.
There is also a server tagging procedure $\srvTag(\servSt, P, c_f)$, which
outputs a tag $t$ on a franking commitment $c_f$, which corresponds to a
ciphertext sent by party $P$. A reporting procedure $\Rprt(\servSt, P, m, k_f,
c_f, t)$ verifies for the server that $t$ is a server tag on $c_f$, which opens
to a message $m$.

\subsection{Causality Graphs}
In order to model message transmission in a way that accounts for asynchronous
networks, we use causality graphs. We define our causality graph object here,
which draws heavily on the graph formalism used in \cite{causalityPres}. A
causality graph $G$ modeling two-party messaging is a directed bipartite graph
consisting of two disjoint sets of vertices $V_0$ and $V_1$ and an edge set $E$.
An edge is a pair of vertices $(v_1, v_2)$ where $v_1 \in V_P$ and $v_2 \in
V_\barP$ for some $P \in \bits$. We write $V = V_0 \cup V_1$ to refer to the
full set of vertices within the graph. Each vertex $v \in V$ contains four
pieces of data: an action type $t \in \setelems{S, R}$, a sending counter
$\sendCtr$, a reception counter $\recvCtr$, and a message $m \in \msgSp$.
Indeed, these four pieces of data are sufficient to uniquely identify a vertex
within a single conversation. We can therefore define a vertex space
$\vertexSpace$ as the direct product $\setelems{S, R} \times \Z^{*} \times
\Z^{*} \times \msgSp$, and the edge space $\edgeSpace$ as $\vertexSpace \times
\vertexSpace$. The action type $t$ indicates whether the vertex corresponds to a
sending ($S$) event or a reception ($R$) event. For any message sent from $P$ to
$\barP$, there is an edge from a sending vertex in $P$ to a receiving vertex in
$\barP$. We define the sub-graph relation as follows: for two graphs $G_1 =
(V_1, E_1)$ and $G_2 = (V_2, E_2)$, we write $G_1 \subseteq G_2$ if $V_1
\subseteq V_2$ and $E_1 \subseteq E_2$. We will also allow causality graphs that
do not contain messages, which will become useful for our security definitions.
Let $G$ be a causality graph; we denote the message-excluded version of $G$ as
$\widetilde{G} = (\setelems{v[0:2] \mid v \in G.V}, \setelems{(v_1[0:2],
v_2[0:2]) \mid (v_1, v_2) \in G.E})$.

\begin{figure}[ht]
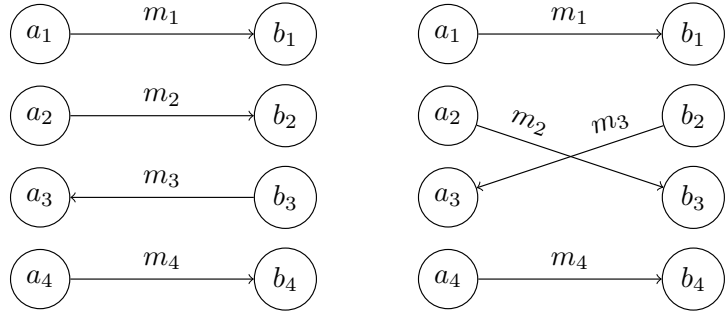

    \centering
    \resizebox{0.8\textwidth}{!}{
        \tikz \graph [math nodes, nodes={circle, draw}, no placement] {
            a_1[x=0 cm,y=3 cm] ->["$m_1$"] b_1[x=3 cm,y=3 cm];
            a_2[x=0 cm, y=2 cm] ->["$m_2$"] b_2[x=3 cm, y=2 cm];
            a_3[x=0 cm,y=1 cm] <-["$m_3$"] b_3[x=3 cm, y=1 cm];
            a_4[x=0 cm,y=0 cm] ->["$m_4$"] b_4[x=3 cm, y=0 cm];
        };
        \hspace{1cm}
        \tikz \graph [math nodes, nodes={circle, draw}, no placement] {
            a_1[x=0 cm,y=3 cm] ->["$m_1$"] b_1[x=3 cm,y=3 cm];
            a_2[x=0 cm, y=2 cm] ->[near start, sloped, above, "$m_2$"] b_3[x=3 cm, y=1 cm];
            b_2[x=3 cm, y=2 cm] ->[near start, sloped, above, "$m_3$"] a_3[x=0
            cm,y=1 cm]; a_4[x=0 cm,y=0 cm] ->["$m_4$"] b_4[x=3 cm, y=0 cm]; }; }
            \caption{Examples of causality graphs. Let the $a$ nodes correspond
            to Alice and the $b$ nodes correspond to Bob. The left graph
            corresponds to a situation in which both Alice and Bob view the left
            ordering from \figref{fig:message_order}. The right graph leads to
            Alice viewing the left ordering and Bob viewing the right ordering
            from the same figure. }
   \label{fig:caus-graph}
\end{figure}

The counters $\sendCtr$ and $\recvCtr$ are monotonically increasing counters
over sending and reception event respectively for each party $P \in \bits$. The
addition of a sending event to $V_P$ increments the sending counter for $V_P$
while the addition of a reception event increments the reception counter.
Consider a sending vertex $v \in V_P$. We have that $v.t = S$, there are
$v.\sendCtr - 1$ sending events that precede $v$ and $v.\recvCtr$ reception
events that precede $v$. The nodes in $V_P$ are totally ordered by the
lexicographic ordering over $v$ given by $(v.\sendCtr, v.\recvCtr)$. For a
message $m'$ sent from $P$ to $\barP$, there are vertices $v_s \in V_P$ and $v_r
\in V_\barP$, where $v_s.t = S$, $v_r.t = R$, $v_s.m = v_r.m = m'$, and $(v_s,
v_r) \in E$.

Now that we have introduced the semantics of the data contained within the
graph, we define graph operations associated with sending and receiving
messages, and how they modify the data contained within the graph.

\paragraph{Graph initialization} An empty graph $G$ consists of empty vertex
sets $V_0$ and $V_1$ along with an empty edge set $E$. Each vertex set $V_P$ has
an associated send counter $\sendCtr$ and reception counter $\recvCtr$. We use
the notation $V_P.\sendCtr$ and $V_P.\recvCtr$ to refer to these counters for
party $P$. Initially, $V_P.\sendCtr = V_P.\recvCtr = 0$ for $P \in \bits$. The
counters will be treated as auxiliary state as opposed to part of the graph.
We use the symbol $\varepsilon$ to denote the empty graph.

\paragraph{Addition of send operation} We use the notation $G \gets G +
(\sendOp, P, m)$ to denote the addition of a sending operation for message $m'$
from party $P$. First, we increment $V_P.\sendCtr$ ($V_P.\sendCtr \gets
V_P.\sendCtr + 1$), then create a new vertex $v$ with $v.t = S$, $v.\sendCtr =
V_P.\sendCtr$, $v.\recvCtr = V_P.\recvCtr$, and $v.m = m'$. Finally, we add $v$
to $V_P$. When updating a non-message-inclusive graph, we simply write $G \gets G +
(\sendOp, P)$. 

\paragraph{Addition of reception operation} The addition of a reception for the
message with sending index $i$ from party $\barP$ is denoted by $G \gets G +
(\recvOp, P, i)$. Note that this operation enforces that such a message with
sending index $i$ exists in $G.V_\barP$. First, we increment $V_P.\recvCtr$,
then create a new vertex $v$ with $v.t = R$, $v.\sendCtr = V_P.\sendCtr$,
$v.\recvCtr = V_P.\recvCtr$, $v.m = v_s.m$, where $v_s \in V_\barP$, $v_s.t =
S$, $v_s.\sendCtr = i$ (by construction, there is only one such vertex). We add
$v$ to $V_P$ and add $(v_s, v)$ to $E$.

\paragraph{Graph validity and consistency} A causality graph $G$ is valid if there exists
a sequence of send and reception operations that lead to its construction from
an empty graph. A sub-graph $G'$ is valid if there exists a valid graph $G$ such
that $G' \subseteq G$. Two graphs $G$ and $G'$ are consistent if there exists a
valid graph $G^*$ such that $G \subseteq G^*$ and $G' \subseteq G^*$. We write
$G \approx G'$ to indicate that $G$ and $G'$ are consistent, and we use $G
\not\approx G'$ to indicate that they are inconsistent.
Notions of validity and consistency will be important in our security
definitions.

\paragraph{Partial ordering over events} As we mentioned before, there is a
total ordering over the events within $V_0$ and $V_1$. For $v_1, v_2 \in V_P$,
we have that $v_1 < v_2$ if $v_1.\sendCtr \leq v_2.\sendCtr$, $v_1.\recvCtr \leq
v_2.\recvCtr$, and $(v_1.\sendCtr, v_1.\recvCtr) \neq (v_2.\sendCtr,
v_2.\recvCtr)$. The causality partial ordering $v_1 \leq v_2$ for $v_1, v_2 \in
V$ is defined as the transitive closure of these individual total orders along
with the edge relation (i.e., $v_1' \leq_E v_2'$ if $(v_1', v_2') \in E$).
Observe that this is an ordering over the sending and reception of messages as
opposed to an ordering of the messages themselves.

\paragraph{Contiguity and gaps}  When a moderator views a sub-graph of a full
conversation, it will be useful to understand which events are contiguous (in
the sense that no other events can be ordered between them) and which events
have gaps. This makes clear where there might be missing context, which can
later be provided by either party if requested. The inclusion of sending and
reception counters within the graph provides rich information that allows
interpretation of the contiguity of events. Consider two vertices $v, v' \in
V_P$ where $v < v'$. If we have that $\max(v'.\sendCtr - v.\sendCtr, v'.\recvCtr
- v.\recvCtr) = 1$, then by construction there can be no $v'' \in V_P$ such that
$v < v'' < v'$, so $v$ and $v'$ are contiguous with one another, among the
vertices in $V_P$. Taking into consideration the edges in $E$, we can ascertain
whether any nodes in $V_\barP$ can be ordered between them. If we have that
$(v'.\sendCtr - v.\sendCtr, v'.\recvCtr - v.\recvCtr) = (a, b)$ for $v, v' \in
V_P$, we know that there are $a$ send events and $b$ reception events that
occurred after $v$, including $v'$. In this way, these counters provide rich
interpretability of gaps within sub-graphs.

\section{Two-Party Transcript Franking}
\label{sec:setting}

Prior work on message franking has focused on reporting single messages that
were received by the reporting user. Often, single messages do not contain
sufficient context for understanding online harassment. Hence, we aim to extend
message franking to enable reporting sequences of messages. We propose a new
cryptographic primitive we call transcript franking that captures this goal. In
addition to providing integrity over the contents of messages contained within a
sequence, a transcript franking scheme must have cryptographic assurance over
the ordering and contiguity of messages reported within a sequence. Due to the
inherent asynchrony of messaging, honest users may observe differing but valid
views of the message order. We aim for users and platform moderators to be able
to see the view of the transcript from the perspective of each party as well the
causal dependencies between messages.

\paragraph{Limitations of current approaches} The original Facebook white-paper
that introduced message franking suggests including server timestamps within the
data tagged by the server \cite{fbSecretConv}. As prior work points out, this is
insufficient for capturing causal dependencies between messages since it does
not directly reflect client-side views and capture message concurrency
\cite{causalityPres}. Furthermore, timestamps do not assure integrity over the
contiguity of reported messages within a conversation. Recent work suggests
incorporating causality metadata into message franking channels, however since
this metadata is client-generated, it can deviate from the actual ordering of
sending and reception events that clients experience \cite{causalityPres}. As we
explain in the introduction, this leads to attacks in which a malicious party
can force the other party to view an upsetting message sequence while only being
able to report a plausibly benign one.

A further limitation is that all prior treatments of message franking allow
reporters to disclose only messages they have received from the other party.
From a motivational standpoint, this seems reasonable since the goal of
reporting is to demonstrate that an abusive party, presumably the non-reporting
party, sent harmful content. However, when reporting multiple messages within a
conversation, a reporter may have to include their own sent messages to provide
sufficient context. In \figref{fig:message_order}, for instance, Alice should be
able to show her message that precedes the message Bob sends in order for the
moderator to evaluate whether it is abusive. The obvious solution of a reporter
disclosing their own sent messages is insufficient since the corresponding
ciphertexts may be malformed for the other party, and existing treatments of
message franking provide no way for recipients to indicate this to the service
provider. An alternative would be to involve both parties in the reporting
process. Yet this can lead to a denial of service attack in which the accused
party refuses to participate. Even with an honest accused party, requiring both
parties to be online is an unfavorable limitation. Hence, we must devise a
reporting protocol that is not contingent upon the participation of the
non-reporting party. Note that such a protocol may still allow the non-reporting
party to continue to disclose more context if they decide to do so.

\paragraph{Our suggested platform model}
In line with the model used by the messaging layer security (MLS) standard
\cite{mlsRfc}, we conceptualize the platform as providing a delivery service
(DS) for message transmission and an authentication service (AS) for managing
user identities. Additionally, we consider a platform-managed moderation service
(MS) for accepting user reports and taking action against abusive user accounts.
Clients issue a \texttt{Send} request with a payload containing the message
ciphertext to the DS. To receive messages, a client issues a \texttt{Recv}
request, to which the server responds with outstanding message ciphertexts. In a
real platform, clients may be identified via usernames or phone numbers. Our
transcript franking abstraction, on the other hand, simply identifies parties
within a conversation via the numeric labels $\bits$ for direct messaging.

Our model assumes a client receives a notification that indicates when the DS
has successfully received their message and another notification when the
recipient client device has successfully received the message. These two events
correspond to the first and second checkmarks shown in widely used messaging
platforms such as Whatsapp and Signal
\cite{whatsappCheckmarks,signalCheckmarks}. Messages may still be dropped or
reordered, but we will concern ourselves with reporting only messages that were
successfully received. We remark that this differs slightly from traditional
platform models for message franking, in which the platform simply tags a
message once it is sent and shares this tag with just the recipient. It turns
out this model will be crucial to achieving our transcript franking security
goals.

\paragraph{Transcript franking syntax and semantics}
We draw heavily on \cite{causalityPres} as inspiration for our syntax and will
later provide a comprehensive comparison between their approach to incorporating
causality into message franking and our notion of transcript franking. A
transcript franking scheme is a tuple of algorithms $\transcriptFrank =
(\srvInit,\allowbreak \Init,\allowbreak \Snd,\allowbreak \Rcv,\allowbreak
\tagSend,\allowbreak \tagRecv,\allowbreak \verifyReport)$, defined over a server
state space $\stSp_S$, a client state space $\stSp_C$, a key space $\keySp$, a
message space $\msgSp$, a commitment space $\commSp$, a franking key space
$\keySp_f$, a message-sent tag space $\tagSp_S$, and a reception tag space
$\tagSp_R$. We detail each algorithm below. Input and output variables names are
unique, and we indicate the ``type'' (the set of possible values) and
description of a variable only the first time it is introduced in order to
reduce verbosity. In general, $\st$ will refer to client or server state the before the
invocation of a routine, while $\st'$ refers to the updated state after the
invocation.

\begin{itemize}
    \item $\servSt \getsr \srvInit()$ outputs an initial server state $\servSt
    \in \stSp_S$.
    \item $\st \getsr \clInit(P, k)$ outputs initial client state $\st \in
          \stSp_C$ for party $\party \in \bits$ and a secret shared channel key
          $k \in \keySp$.
    \item $\st', c \getsr \Snd(P, \st, m)$ is a client procedure that produces a
          ciphertext $c \in \ctSp$ and updated client state $\st' \in \stSp_C$
          for party $P$, with original client state $\st \in \stSp_C$,
          corresponding to an input message $m \in \msgSp$. The ciphertext $c$
          contains a franking tag $c_f \in \commSp$, which is a commitment to
          $m$, and a sending index $i \in \Z^*$.
    \item $\servSt', \sendTag \gets \tagSend(\servSt, P, c_f)$ is a server
          procedure that produces a tag $\sendTag \in \tagSp_S$ for a message
          sending event, where $P$ is the sending party, and $c_f$ is the
          franking tag for the message.
    \item $\servSt, \recvTag \gets \tagRecv(\servSt, P, c_f)$ is a server
          procedure that produces a tag $\recvTag \in \tagSp_R$ for a message
          reception event by receiving party $P$. This procedure is
          invoked only when the receiving client indicates that the message was
          successfully received and valid.
    \item $\st', m, k_f, i \gets \Rcv(\party, \st, c)$ is a client procedure
          that processes a received ciphertext $c \in \ctSp$ and decrypts it to
          a message $m \in \msgSp \cup \setelems{\bot}$ with sending index $i \in
          \Z^*$ and a franking key $k_f \in\keySp_f$. The message $m$ is $\bot$
          if and only if decryption fails. 
    \item $G \gets \verifyReport(\st_S, \reportInfo)$ takes as input the server
          state $\st_S$ as well as a client-provided report $\reportInfo \in
          \{(\bits \times \msgSp \times \keySp_f \times \commSp \times \tagSp_S
          \times \tagSp_R)\}$, which is a set of tuples of the form $(P, m, k_f,
          c_f, \sendTag, \recvTag)$. This procedure verifies the report and, if
          the report is valid, produces a causality graph $G \in (\vertexSpace
          \times \edgeSpace) \cup \setelems{\bot}$ for the messages contained
          within the report. If the reporting information is invalid, the
          procedure outputs $\bot$. Note, we enforce here that only messages
          that have been sent and received can be reported.
\end{itemize}
To illustrate the semantics of a transcript franking scheme, we provide a brief
example of how these procedures are invoked. At initialization time, the server
runs $\srvInit$ to produce an initial state. When two clients initiate a
conversation, both clients run $\init(P, k)$ over a shared key $k$ to obtain
initial client states. When client $P$ sends a message $m$, it obtains $c$ from
the output of $\Snd(P, \st, m)$. The ciphertext $c$ is sent to a platform
server, which then invokes $\tagSend(\servSt, P, c_f)$ to produce a tag
$\sendTag$, which is returned to $P$. Eventually, $\barP$ contacts the server to
retrieve outstanding messages. Upon doing so, the server transmits the
ciphertext $c$, along with $\sendTag$, to $\barP$, which decrypts it by invoking
$\Rcv$. Upon indicating valid reception to the platform, the server runs
$\tagRecv$ to produce $\recvTag$, which is sent to both $P$ and $\barP$. When
party $P$ wishes to report a set of messages, they compile $(P, m, k_f, c_f,
\sendTag, \recvTag)$ for each message in $\reportInfo$. The moderator then runs
$\verifyReport$ to produce a causality graph $G$.

\paragraph{Correctness}
Informally, correctness requires that all honestly generated and tagged messages
can be successfully received and that any sub-graph of the full causality graph
can be reported. We make this precise with the game shown in
\figref{fig:correctness}, and define the correctness advantage of an adversary
$\advA$ as $$\advantageCorr_\tfScheme(\advA) = \Pr[\gameCorr_{\tfScheme}(\advA) = 1]\;.$$
Formally, a transcript franking scheme $\tfScheme$ achieves correctness if
$\advantageCorr_\tfScheme(\advA) = 0$ for all adversaries $\advA$.
\begin{figure*}[t]
    \centering
    \fhpages{0.4}{
        \procedurev{$\gameCorr_{\tfScheme}(\advA)$}\\
        $\serverKey \getsr \keySp$,
        $\st_\advA, \channelKey \getsr \advA()$,
        $\winFlag \gets 0$\\
        $\st_S \getsr \srvInit(\serverKey)$\\
        $\st_0 \getsr \clInit(0, \channelKey)$,
        $\st_1 \getsr \clInit(1, \channelKey)$\\
        $\R_t, \recvMessages, \R \gets \setelems{}, \setelems{}, \setelems{}$\\
        $\advA^\calO(\st_\advA, \channelKey)$\\
        $\creturn \winFlag$\\

        \procedurev{$\calO.\SendTagOracle(P, m)$}
        $(\st_P, c) \getsr \Snd(P, \st_P, m)$\\
        $\st_S, \sendTag \gets \tagSend(\st_S, P, c.c_f)$\\
        $G \gets G + (\texttt{S}, P, m)$\\
        Add $(P, c, \sendTag)$ to $\R_t$\\
        $\creturn c, \sendTag$
    }{

        \procedurev{$\calO.\RecvTagOracle(P, c, \sendTag)$}\\
        Assert $(\barP, c.c_f, \sendTag) \in \R_t$\\
        Assert $(P, c, \sendTag) \not\in \R$\\
        Add $(P, c, \sendTag)$ to $\R$\\
        $\st_P, m, k_f, i \gets \Rcv(P, \st_P, c)$\\
        $\cif m \neq \bot \cthen$\\
        \ind$\st_S, \recvTag \gets \tagRecv(\st_S, P, c_f)$\\
        \ind $G \gets G + (\texttt{R}, P, c.i)$\\
        \ind Add $(P, m, k_f, c_f, t_s, t_r)$ to $\recvMessages$\\
        $\celse$\\
        \ind $\winFlag \gets 1$\\
        $\creturn m, k_f, \sendTag, \recvTag$\\

        \procedurev{$\calO.\ReportOracle(\reportInfo)$}\\
        Assert $|\reportInfo| > 0$\\
        $G' = \verifyReport(\servSt, \reportInfo)$\\
        $\cif \reportInfo \subseteq \R_r \land ((G' = \bot) \lor (G' \not\subseteq
        G))$:\\
        \ind $\winFlag \gets 1$
    }
    \caption{The security game for transcript franking correctness.
    }
    \label{fig:correctness}
\end{figure*}

\paragraph{Tagging reception events}
We discuss in detail what it means for the platform to tag a successful
reception event. Recall that our goal is to enable reporters to include their
own sent messages within a report without interaction from the other party. The
reception tag serves as a way to achieve this goal, acting as an acknowledgement
from the other party that the reported message was received. However, in order
for this acknowledgement to be meaningful, we must carefully consider at which
point the platform generates the reception tag. One option is to generate the
tag once the recipient sends a message to the service provider indicating that
their verification check passed, meaning the franking tag corresponds to the
received plaintext. This would require two round-trips, the first for to
retrieve the message, and the second to explicitly tell the server that it was
well-formed.

Another option is to do this in one round trip: immediately tag the reception
event and send the reception tag along with the ciphertext. If the ciphertext is
malformed, the recipient can issue a complaint to the service provider,
nullifying the reception tag in question. Therefore, two round trips are made
only if the ciphertext is malformed. An implementation may also enforce a
reasonable time window within which to make such a complaint. We discuss
receiver acknowledgement further in \secref{sec:discussion}.

\paragraph{Comparison to Chen-Fischlin} Observe that our formalism, unlike that
of \cite{causalityPres} includes two server-side tagging procedures as opposed
to one. This makes possible acknowledgement of message receipt by the platform
and message delivery to the recipient client device. As a result, the server, as
opposed to client devices, becomes the authority on message ordering, leading to
additional security benefits as we discuss next. Instead of having a single
$\Init$ procedure shared by the client and server, we specify $\srvInit$, and
$\clInit$. The syntax and semantics of the message franking channel in
Chen-Fischlin does not enable parties to report their own sent messages while
our formalism does. While Chen and Fischlin focus on two-party channels, we show
how to enable transcript franking for group channels in \secref{sec:groupmsg}.
We provide a more in-depth comparison in \appref{app:comparison}.

\section{Security Definitions for Two-Party Transcript Franking}
\label{sec:two-party}
In this section, we introduce security notions for transcript franking. Our
setting requires that the platform is the same entity that handles moderation
reports. Recall that we defined the transcript franking syntax and semantics in
\secref{sec:setting}. Our security definitions formalize notions of
confidentiality and accountability for the reporting process. Accountability
consists of two properties, reportability and transcript integrity, which we
further explain in the remainder of the section.

\paragraph{Threat model, informally} As the platform is trusted for handling user reports,
we trust it to serve as source of ground truth for the ordering of messages.
This does not mean that we trust the platform with the contents of messages or
that we assume a malicious platform will not attempt to maul ciphertexts. We do
not explicitly model the public key infrastructure used to authenticate users,
though we note that solutions such as key transparency
\cite{coniks,rzks,optiks,parakeet,versa} enable PKIs without placing full trust in the
service provider.

\paragraph{Transcript reportability}
When a client accepts a message as valid, it should be the case that this
message can be successfully reported to the moderator as well. Transcript
reportability, which is formally specified by a security game in
\figref{fig:reportability}, is a security property we define that captures this
goal. The adversary $\advA$ attempts to craft a report, containing messages
accepted by an honest recipient, that does not verify for the moderator. We
define the reportability advantage of an adversary $\advA$ for a transcript
franking scheme $\tfScheme$ as follows:
$$\advantageRep_{\tfScheme}(\advA) = \Pr[\gameRep_\tfScheme(\advA) = 1]\;.$$

\paragraph{Transcript integrity}
To model malicious reporters that attempt to trick moderators into accepting
incorrect causality graphs, we define a security notion called transcript
integrity, which is captured by the game in \figref{fig:tr-integ}. The adversary
$\advA$ controls both parties and has access to three oracles: $\SendTagOracle$,
$\RecvTagOracle$ and $\ReportOracle$. A ground truth causality graph $G$ is
maintained by the game and updated by $\SendTagOracle$ and $\RecvTagOracle$. The
adversary wins if it can produce two valid reports, where at least one is not a
sub-graph of the ground truth causality graph, or where the generated sub-graphs
are inconsistent.

To elaborate, we recall some definitions regarding causality graphs that were
given in \secref{sec:prelim}. First, recall that $\tilde{G}$ refers to the graph
with message labels removed, allowing us to compare with the ground-truth
causality graph $G$ maintained by the game. Second, two causal sub-graphs are
consistent if there exists a valid causality graph $G'$ of which they are both
sub-graphs. This final consistency condition means that there is a unique ground
truth causality graph from which sub-graphs can be reported. We view this as a
natural lifting of the receiver binding notion proposed in \cite{smfComm} to the
multi-message setting. The advantage of an adversary $\advA$ in the transcript
integrity game is defined as follows:
$$\advantageTrInt_{\tfScheme}(\advA) = \Pr[\gameTrInt_{\tfScheme}(\advA) = 1]\;.$$

\paragraph{Confidentiality}
In order for a transcript franking scheme to achieve confidentiality, the
reporting process must not reveal any information about unreported messages. Of
course, the underlying messaging channel itself must provide confidentiality as
well. We formalize this property in a security game in \figref{fig:conf},
inspired by the real-or-random multi-opening confidentiality notion proposed in
\cite{smfComm}. Our definition uses the function $\clen : \msgSp \to \Z^*$,
which outputs the length of a ciphertext for plaintext $m$.
The ROR-advantage against the confidentiality of a transcript
franking scheme $\tfScheme$ for an adversary $\advA$ is:
$$\advantageConf_{\tfScheme}(\advA) = |\Pr[\gameConf_{\tfScheme, 0}(\advA)] -
      \Pr[\gameConf_{\tfScheme, 1}(\advA)]|\;.$$

\begin{figure*}[t]
    \centering
    \fhpagesss{0.23}{0.36}{0.33}{
        \procedurev{$\gameRep_{\tfScheme}(\advA)$}\\
        $\serverKey \getsr \keySp$\\
        $\st_\advA, \channelKey \getsr \advA()$\\
        $\winFlag \gets 0$; $\R \gets \setelems{}$\\
        $\st_S \getsr \srvInit(\serverKey)$\\
        $\st_0 \getsr \clInit(0, \channelKey)$\\
        $\st_1 \getsr \clInit(1, \channelKey)$\\
        $G, \R_t, \recvMessages \gets \varepsilon, \setelems{}, \setelems{}$\\
        $\advA^\calO(\st_\advA, \channelKey)$\\
        $\creturn \winFlag$
    }{
        \procedurev{$\calO.\RecvTagOracle(P, c, \sendTag)$}\\
        Assert $(\barP, c.c_f, \sendTag) \in \R_t$\\
        Assert $(P, c, \sendTag) \not\in \R$\\
        Add $(P, c, \sendTag)$ to $\R$\\
        $\st_P, m, k_f, i \gets \Rcv(P, \st_P, c)$\\
        $\cif m \neq \bot \cthen$\\
        \ind$\st_S, \recvTag \getsr \tagRecv(\st_S, P, c_f)$\\
        \ind Add $(P, m, k_f, c_f, t_s, t_r)$ to $\recvMessages$\\
        $G \gets G + (\texttt{R}, P, c.i)$\\
        $\creturn m, k_f, \sendTag, \recvTag$\\

        \procedurev{$\calO.\SendOracle(P, m)$}\\
        $(\st_P, c) \getsr \Snd(P, \st_P, m)$\\
        $\creturn c$

    }{
        \procedurev{$\calO.\TagSendOracle(P, c_f)$}\\
        $G \gets G + (\texttt{S}, P)$\\
        $\st_S, \sendTag \getsr \tagSend(\st_S, P, c_f)$\\
        Add $(P, c_f, \sendTag)$ to $\R_t$\\
        $\creturn \sendTag$\\

        \procedurev{$\calO.\ReportOracle(\reportInfo)$}\\
        Assert $|\reportInfo| > 0$\\
        $G' \gets \verifyReport(\servSt, \reportInfo)$\\
        $\cif G' \not\subseteq G \land \reportInfo \subseteq \R_r$:\\
        \ind $\winFlag \gets 1$
    }

    \caption{The security game for transcript reportability.
    }
    \label{fig:reportability}
\end{figure*}

\begin{figure*}[t]
    \centering
    \fhpagesss{0.23}{0.35}{0.35}{
        \procedurev{$\gameTrInt_{\tfScheme}(\advA)$}\\
        $\serverKey \getsr \keySp$; $\winFlag \gets 0$\\
        $\st_\advA, \channelKey \getsr \advA()$\\
        $\st_S \getsr \srvInit(\serverKey)$\\
        $\st_0 \getsr \clInit(0, \channelKey)$\\
        $\st_1 \getsr \clInit(1, \channelKey)$\\
        $G, \R_t, \R \gets \varepsilon, \setelems{}, \setelems{}$\\
        $\advA^\calO(\st_\advA, \channelKey)$\\
        $\creturn \winFlag$
    }{
        \procedurev{$\calO.\SendTagOracle(P, c)$}\\
        $G \gets G + (\texttt{S}, P)$\\
        $\st_S, \sendTag \gets \tagSend(\st_S, P, c.c_f)$\\
        Add $(P, c, \sendTag)$ to $\R_t$\\
        $\creturn \sendTag$
    }{
        \procedurev{$\calO.\RecvTagOracle(P, c, \sendTag)$}\\
        Assert $(\barP, c, \sendTag) \in \R_t$\\
        Assert $(P, c, \sendTag) \not\in \R$\\
        Add $(P, c, \sendTag)$ to $\R$\\
        $\st_S, \recvTag \gets \tagRecv(\st_S, P, c.c_f)$\\
        $G \gets G + (\texttt{R}, P, c.i)$\\
        $\creturn \recvTag$\\

        \procedurev{$\calO.\ReportOracle(\reportInfo_1, \reportInfo_2)$}\\
        Assert $|\reportInfo_1| > 0$ and $|\reportInfo_2| > 0$\\
        $G_1 \gets \verifyReport(\servSt, \reportInfo_1)$\\
        $G_2 \gets \verifyReport(\servSt, \reportInfo_2)$\\
        $\cif G_1 \neq \bot \land G_2 \neq \bot \land\\
        \ind ((\widetilde{G_1} \not\subseteq G) \lor (\widetilde{G_2} \not\subseteq G)\\\ind \lor (G_1
        \not\approx G_2))$:\\
        \ind $\winFlag \gets 1$
    }
    \caption{The security game for transcript integrity.}
    \label{fig:tr-integ}
\end{figure*}

\begin{figure*}[t]
    \centering
    \fhpages{0.4}{
        \procedurev{$\gameConf_{\tfScheme, b}(\advA)$}\\
        $\serverKey \getsr \keySp_S$; $\channelKey \getsr \keySp_C$\\
        $\st_\advA \getsr \advA()$,
        $\queriedCt \gets \setelems{}$\\
        $\st_S \getsr \srvInit(\serverKey)$\\
        $\st_0 \getsr \clInit(0, \channelKey)$,
        $\st_1 \getsr \clInit(1, \channelKey)$\\
        $\hat{b} \gets \advA^\calO(\st_\advA)$\\
        $\creturn \hat{b}$\\

        \procedurev{$\calO.\SendOracle(P, m)$}
        $(\st_P, c) \getsr \Snd(P, \st_P, m)$\\
        $\queriedCt \gets \queriedCt \cup \setelems{c}$\\
        $\creturn c$
    }{
        \procedurev{$\calO.\RecvOracle(P, c, \sendTag)$}\\
        Assert $c \in \queriedCt$\\
        $\st_P, m, k_f, i \gets \Rcv(P, \st_P, c)$\\
        $\creturn m, k_f$\\

        \procedurev{$\calO.\ChalSendOracle(P, m)$}\\
        $(\st_P, c_0) \getsr \Snd(P, \st_P, m)$\\
        $c_1 \getsr \bits^{\clen(m)}$\\
        $\creturn c_b$

    }
    \caption{The security game for transcript franking confidentiality.
    }
    \label{fig:conf}
\end{figure*}

\section{Our Construction}\label{sec:construction}

The key idea of our construction is to report platform-tagged acknowledgements
of message sending and reception. These acknowledgements contain counters,
maintained as part of the server state, that allow the moderator to reliably
reconstruct a portion of the causality graph corresponding to the platform's
view of message transmission. Since our construction uses four counters per
conversation, we call it the quad-counter construction (QCC). We present the
pseudocode specification of our construction in \figref{fig:tf-dm-const}, which
specifies how the service provider handles state for a single conversation
between two parties. Parallelizing this for multiple conversations can be done
in a straightforward manner, as we further discuss in
\secref{sec:discussion}.

\paragraph{Client logic} The client procedures $\Init$, $\Snd$, and $\Rcv$
comprise a secure messaging channel with reportable franking tags $c_f$,
committing to plaintext content $m$, with the opening $k_f$. Indeed, these three
procedures form a message franking channel \cite{msgFrChan, causalityPres}.

\paragraph{Server logic} Upon server initialization, sending and reception
counters for each party are initialized to 0 and the server samples a MAC key.
When a party $P$ sends a message, the server increments the send counter for $P$
and tags the send event. Similarly, it increments the reception counter for $P$
when $P$ successfully receives a message, and then produces a tag for this
event. In the pseudocode, the symbols $S$ and $R$ are labels that denote sending
and reception events respectively.

\paragraph{Reporting}
To report a set of messages, a client compiles the message contents $m$, the
franking key $k_f$, the franking tag $c_f$, the send tag $\tagSend$, and the
reception tag $\recvTag$ for each message. The client then forwards this
information to the platform within a single report object $\reportInfo$. The
platform verifies the commitments for each message along with its own MAC tags.
Then, it uses the indexes within the tags to construct and order the vertices
for the sub-graph, and it adds edges between vertices that correspond to the
same message. A moderator can interpret the contiguity of vertices as explained
in \secref{sec:prelim}.

\begin{figure*}[t]
    \centering
    \fhpages{0.45}{
        \procedurev{$\srvInit()$}\\
        $\macKey \getsr \keySp$; $\sendCtr_0, \recvCtr_0, \sendCtr_1, \recvCtr_1 \gets (0, 0, 0, 0)$\\
        $\creturn \{\macKey, \sendCtr_0, \recvCtr_0, \sendCtr_1, \recvCtr_1\}$\\

        \procedurev{$\clInit(P, k)$}\\
        $\creturn \channel.\init(P, k)$\\

        \procedurev{$\Snd(P, \st, m)$}\\
        $(k_f, c_f) \gets \Com(m)$\\
        $(\st.\channelSt, c_e) \getsr \channel.\Snd(\party, \st.\channelSt, (m,
        k_f))$\\
        $\creturn \st, (c_e, c_f)$\\

        \procedurev{$\tagSend(\servSt, P, c_f)$}\\
        $\sendCtr_P \gets \sendCtr_P + 1$, $\ack \gets (S, P, \barP, c_f,
            \sendCtr_P, \recvCtr_P)$\\
        $\sendTag \gets (\ack, \mac(\macKey, \ack))$\\
        $\creturn \st_S, \sendTag$\\

        \procedurev{$\tagRecv(\servSt, P, c_f)$}\\
        $\recvCtr_P \gets \recvCtr_P + 1$, $\ack \gets (R, \barP, P, c_f,
            \sendCtr_P, \recvCtr_P)$\\
        $\recvTag \gets (\ack, \mac(\macKey, \ack))$\\
        $\creturn \st_S, \recvTag$
    }{

        \procedurev{$\Rcv(\party, \st, c, \sendTag, \recvTag)$}\\
        $(\st.\channelSt, m, k_f, i) \gets
            \channel.\Rcv(\party, \channelSt, c)$\\
        $\cif m = \bot \lor \VerC(m, k_f,
            c.c_f) = 0 $:\\
        \ind $\creturn \bot$\\
        $\creturn \st, m, k_f, i$\\

        \procedurev{$\verifyReport(\servSt, \reportInfo)$}\\
        Initialize empty graph $G$\\
        For $(P, m, k_f, c_f, \sendTag, \recvTag)$ in $\reportInfo$\\
        \ind $b \gets \Ver(\macKey, \sendTag.\ack, \sendTag.\tagLabel) \land\\
        \ind\ind \Ver(\macKey, \recvTag.\ack,
        \recvTag.\tagLabel)\land$\\
        \ind\ind$\VerC(m, k_f, c.c_f)\land \sendTag[0] = S \land\\\ind\ind \recvTag[0] =
        R\land \sendTag.c_f = \recvTag.c_f$\\
        \ind $\cif b = 0$:\\
        \ind \ind $\creturn \bot$\\
        \ind $\sendCtr_P, \recvCtr_P = \sendTag.\ack.(\sendCtr, \recvCtr)$\\
        \ind $\sendCtr_\barP, \recvCtr_\barP = \recvTag.\ack.(\sendCtr, \recvCtr)$\\
        \ind $v_s = (S, \sendCtr_P, \recvCtr_P, m)$\\
        \ind $v_r = (R, \sendCtr_\barP, \recvCtr_\barP, m)$\\
        \ind Add $v_s$ to $G.V_P$, add $v_r$ to $G.V_\barP$\\
        \ind Add $(v_s, v_r)$ to $G.E$\\
        $\creturn G$
    }
    \caption{Pseudocode for our two-party transcript franking construction.}
    \label{fig:tf-dm-const}
\end{figure*}

\paragraph{Security proofs}
We now demonstrate the security of our transcript franking construction,
$\tfScheme$. We begin by proving transcript integrity. The following lemma will
be helpful for our proof.
\begin{lemma}
      Let $G_1$ and $G_2$ be two valid two-party causality sub-graphs. Suppose
      $\widetilde{G_1} \approx \widetilde{G_2}$ but $G_1 \not\approx G_2$. Then
      there must be $v_1 \in G_1.V$ and $v_2 \in G_2.V$ such that $v_1[0 : 2] =
      v_2 [0 : 2]$ but $v_1.m \neq v_1.m$.
      \label{lemma:consistency}
\end{lemma}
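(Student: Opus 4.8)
The plan is to prove the contrapositive: assuming $\widetilde{G_1} \approx \widetilde{G_2}$ and that \emph{no} pair $v_1 \in G_1.V$, $v_2 \in G_2.V$ with $v_1[0:2] = v_2[0:2]$ and $v_1.m \neq v_2.m$ exists, I would derive $G_1 \approx G_2$, contradicting the stated hypothesis $G_1 \not\approx G_2$. (I read the conclusion as $v_1.m \neq v_2.m$, correcting the evident typo.) First I would record a structural uniqueness fact that falls out of the definitions in \secref{sec:prelim}: within any one valid graph, the structural data $(t, \sendCtr, \recvCtr) = v[0:2]$ determines a vertex, because for each party the vertices lie on a single monotone lattice path in the $(\sendCtr, \recvCtr)$ plane along which each operation increments exactly one coordinate. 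Hence the ``message at a given structural position'' is well defined separately inside $G_1$ and inside $G_2$, and the no-collision assumption makes it well defined on all of $G_1.V \cup G_2.V$ (matched positions carry a common message).

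Next I would invoke $\widetilde{G_1} \approx \widetilde{G_2}$ to obtain a valid message-excluded graph $\widetilde{H}$ with $\widetilde{G_1} \subseteq \widetilde{H}$ and $\widetilde{G_2} \subseteq \widetilde{H}$, and attempt to lift it to a valid message-included graph $H$ that contains both $G_1$ and $G_2$. The lifting gives each vertex $w$ of $\widetilde{H}$ a message: if $w$ is the structural projection of some $v \in G_1.V \cup G_2.V$, it inherits that well-defined message; otherwise it is left free. The constraints to honor are precisely those of validity: every reception has a unique incoming edge from a send of the other party carrying the same message, and each send is received at most once (which I would read off the reception operation), so the edge relation of a valid graph matches each reception to a distinct send.

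The heart of the argument is edge-consistency of this assignment, and here the two hypotheses interact. For an edge $(w_s, w_r)$ of $\widetilde{H}$ with both endpoints pinned, I would show $w_s, w_r$ carry the same message by a short case analysis. If the edge already lies in $\widetilde{G_1}$ (resp. $\widetilde{G_2}$), then the corresponding edge lies in $G_1$ (resp. $G_2$), whose validity forces the send and reception messages to agree; two reports can disagree only by placing different messages at a common structural position, which the no-collision assumption forbids. The two dangerous configurations — two receptions pointing at the same send, or a reception pinned to $\mu$ whose incoming send in $\widetilde{H}$ is pinned to $\mu' \neq \mu$ — either make $\widetilde{H}$ itself invalid (a send with two outgoing edges), contradicting $\widetilde{G_1} \approx \widetilde{G_2}$, or collapse to a structural collision, contradicting the assumption.

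I expect the main obstacle to be receptions whose incoming edge in $\widetilde{H}$ is \emph{fresh} (in neither $\widetilde{G_1}$ nor $\widetilde{G_2}$) and whose forced send is pinned to the wrong message. The resolving observation is that $\widetilde{H}$ is not sacred: since we are free to enlarge the common valid super-graph, any such reception with message $\mu$ can be re-routed to a newly appended send of the other party carrying $\mu$ at a structural position disjoint from all those occurring in $G_1, G_2$, which never disturbs a pinned position. I expect the one genuinely delicate step to be performing all these re-routings simultaneously, so rather than editing $\widetilde{H}$ in place I would build $H$ directly as an explicit valid operation sequence realizing the two parties' (now message-consistent) lattice paths together with the auxiliary fresh sends. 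I would also note that in the intended application $G_1, G_2$ are outputs of $\verifyReport$, where every reception is paired with its own send carrying the same message, so the fresh-edge case is vacuous and the first three paragraphs already suffice; the re-routing is needed only to cover the lemma in the full generality of arbitrary valid sub-graphs.
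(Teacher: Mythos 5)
Your route is the same as the paper's: argue by contradiction from the no-collision assumption, take a valid structural supergraph witnessing $\widetilde{G_1}\approx\widetilde{G_2}$, and lift it to a message-inclusive valid supergraph by pinning each send operation whose counters match a vertex of $G_1.V\cup G_2.V$ with that vertex's message (unique by assumption). Your first three paragraphs, together with your closing remark that outputs of $\verifyReport$ are \emph{edge-complete} (every reception vertex appears with its matching send vertex and the connecting edge, carrying the same message), are essentially the paper's proof carried out more carefully: the paper assigns messages only at send operations and never says why a pinned \emph{reception} vertex ends up with the right message, and edge-completeness is exactly what makes that step go through.

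The genuine gap is your fourth paragraph. The re-routing of a pinned reception whose incoming edge is fresh onto an auxiliary send ``at a structural position disjoint from all those occurring in $G_1,G_2$'' cannot be carried out in general, and no repair is possible, because for arbitrary valid sub-graphs the lemma is false. Take $\mu\neq\mu'$ and any $\nu$, and let $G_1$ have $G_1.V_1=\{(S,1,0,\mu'),(R,1,1,\nu)\}$, $G_1.V_0=\{(S,1,1,\nu)\}$, and the single edge $\bigl((S,1,1,\nu),(R,1,1,\nu)\bigr)$; it is a valid sub-graph of the conversation ``party $1$ sends $\mu'$, party $0$ receives it, party $0$ sends $\nu$, party $1$ receives it'' (drop party $0$'s reception vertex and its edge). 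Let $G_2$ have $G_2.V_0=\{(R,0,1,\mu)\}$ and no edges; it is a valid sub-graph of ``party $1$ sends $\mu$, party $0$ receives it.'' The structural positions occurring in the two graphs are disjoint, so no collision exists, and $\widetilde{G_1}\approx\widetilde{G_2}$ is witnessed by the structural graph of the first conversation. Yet $G_1\not\approx G_2$: in any common valid supergraph, party $0$'s first reception, pinned to $\mu$, must be attached to a send of party $1$ scheduled before it; party $1$'s first send is pinned to $\mu'\neq\mu$, and every later send of party $1$ is forced, through the pinned positions $(R,1,1)$ and $(S,1,1)$ and the pinned edge between them, to occur after party $0$'s first reception. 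This is precisely the cyclic scheduling obstruction your auxiliary send would have to evade, and it cannot: placing the fresh send late enough to respect the pinned counters places it after the reception that needs it. The honest conclusion is the one you relegate to a side remark: restrict the lemma's hypothesis to edge-complete valid sub-graphs (which is all that \thmref{thm:dm-tr-int} needs, since $\verifyReport$ only emits such graphs), under which your first three paragraphs already constitute a complete proof, and drop the re-routing claim.
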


\begin{proof}
      Assume for the sake of contradiction that for all $v_1 \in G_1.V$ and $v_2
      \in G_2.V$, that if $v_1[0 : 2] = v_2 [0 : 2]$, then $v_1.m = v_2.m$.
      Since $\widetilde{G_1} \approx \widetilde{G_2}$, there exists some valid
      causality graph $G$ such that $\widetilde{G_1}, \widetilde{G_2} \subseteq
      G$. This means that there exists a sequence of send and receive operations
      that constructs $G$. We can use the same sequence of operations to
      generate a valid message-inclusive graph $G^*$, such that $G_1, G_2
      \subseteq G^*$, contradicting the assumption that $G_1 \not\approx G_2$.
      In each send operation, we simply include the message corresponding to the
      vertex $v \in G_1.V \cup G_2.V$, if the counters for that send operation
      correspond to a vertex in the union of the two vertex sets. By our initial
      assumption a unique such vertex exists. For all other send operations, we
      can include an arbitrary message, the empty string for instance.  This
      completes the proof.%
\qed
\end{proof}

\begin{theorem}      \label{thm:dm-tr-int}
  Let $\tfScheme$ be the transcript franking scheme given in
  \figref{fig:tf-dm-const}.
      Let $\advA$ be a transcript integrity adversary against $\tfScheme$. Then we give
      an EUF-CMA adversary $\advB$ and V-Bind adversary
      $\advC$ 
            such that
      \begin{align*}
            \advantageTrInt_{\tfScheme}(\advA) \leq \advantageEUFCMA_{\mathsf{MAC}}(\advB)
             + \advantageVBind_{\mathsf{CS}}(\advC)\;.
      \end{align*}
      Adversaries $\advB$ and $\advC$ run in time that of $\advA$ plus a small
      overhead.
\end{theorem}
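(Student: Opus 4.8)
The plan is to bound the two distinct ways the adversary can win the transcript integrity game---producing a report whose message-excluded graph is not a sub-graph of the ground-truth graph $G$, or producing two reports whose graphs are mutually inconsistent---by a MAC forgery and a commitment-binding break respectively, via a short sequence of game hops. Throughout I would exploit that the per-party send and reception counters strictly increase, so that each genuine acknowledgement $\ack$ output by the server is uniquely indexed by its $(\text{type}, P, \sendCtr, \recvCtr)$ data.

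First I would pass to a game $\game_1$ identical to $\gameTrInt$ except that $\verifyReport$ aborts (setting a flag $\cbad$) whenever a send tag or reception tag passes its MAC check but its underlying $\ack$ was never produced by a prior $\tagSend$ or $\tagRecv$ call. The set of $\ack$ values ever submitted to $\mac(\macKey, \cdot)$ is exactly those recorded through $\tagSend$ (hence appearing in $\R_t$) together with those produced by $\tagRecv$; so any report triggering $\cbad$ exhibits an $(\ack, \macTag)$ pair that verifies under $\macKey$ with $\ack$ never MAC-queried. I would therefore build $\advB$ that simulates the entire game using its $\mac(\macKey, \cdot)$ and $\Ver(\macKey, \cdot, \cdot)$ oracles and outputs the offending pair when $\cbad$ fires, giving $|\Pr[\game_0 \Rightarrow 1] - \Pr[\game_1 \Rightarrow 1]| \le \advantageEUFCMA_{\macScheme}(\advB)$.

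Next I would argue that in $\game_1$ the first winning disjunct cannot occur. When both reports verify, every reported tuple now carries genuine acknowledgements, so each induced send vertex $(S, \sendCtr_P, \recvCtr_P)$ and reception vertex $(R, \sendCtr_\barP, \recvCtr_\barP)$ has counters matching an event actually appended to $G$ by $\SendTagOracle$ or $\RecvTagOracle$; thus the vertices of $\widetilde{G_i}$ lie in $G$. The delicate point is the edges: $\verifyReport$ links a send and a reception vertex exactly when their acknowledgements share the franking tag $c_f$, whereas $G$ links them according to the sending index recorded at reception time. I would reconcile these two linkings using that genuine acknowledgements are in bijection with graph events and that the shared $c_f$ pins the matched send and reception acknowledgements to the same transmission, yielding $\widetilde{G_i} \subseteq G$. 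I expect this edge-consistency step to be the main obstacle, since it is precisely where the interaction between the commitment, the MAC'd counters, and the ground-truth graph semantics must be made to agree; everything else is routine bookkeeping.

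Finally, with $\widetilde{G_1}, \widetilde{G_2} \subseteq G$ and $G$ valid, both message-excluded graphs are sub-graphs of a common valid graph, so $\widetilde{G_1} \approx \widetilde{G_2}$. Combined with the remaining winning condition $G_1 \not\approx G_2$, \lemmaref{lemma:consistency} yields vertices $v_1 \in G_1.V$ and $v_2 \in G_2.V$ with $v_1[0:2] = v_2[0:2]$ but $v_1.m \neq v_2.m$. Since these vertices share type and counters, and genuine acknowledgements with identical counters carry identical $c_f$, the two reporting tuples open one commitment $c_f$ to the distinct messages $v_1.m$ and $v_2.m$. I would then construct $\advC$ that runs $\advA$ in $\game_1$, detects this collision, and outputs $(v_1.m, k_f, v_2.m, k_f', c_f)$, giving $\Pr[\game_1 \Rightarrow 1] \le \advantageVBind_{\commScheme}(\advC)$. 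Summing the two bounds yields the stated inequality, and both $\advB$ and $\advC$ run in the time of $\advA$ plus small overhead.
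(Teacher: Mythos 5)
Your proposal is correct and follows essentially the same route as the paper's proof: a game hop charging reports containing never-issued acknowledgements to EUF-CMA security of the MAC, an argument that genuine acknowledgements force $\widetilde{G_i} \subseteq G$ because the server counters mirror the ground-truth graph's counters, and an application of \lemmaref{lemma:consistency} to extract a commitment-binding break from two mutually inconsistent reports. The only differences are organizational (the paper splits the two winning conditions into two explicit failure events $F_1, F_2$ rather than folding the MAC step into a single hop), and you are in fact somewhat more explicit than the paper about the edge-matching subtlety between the $c_f$-based linking in $\verifyReport$ and the index-based linking in the ground-truth graph.
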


\begin{proof}
      We proceed via a sequence of game hops with Game $\game_0$ equivalent to
      $\gameTrInt_{\tfScheme}$, defined in \figref{fig:tr-integ}. To aid with future game
      definitions, we provide some additional bookkeeping, initializing an empty
      set $\R'$ at the start of the game. Game $\game_0$ adds $(S, P, \barP,
      c.c_f, G.(\sendCtr_P, \recvCtr_P))$ to $\R'$ before the $\creturn$
      statement of $\SendTagOracle$. Similarly, it adds $(R, \barP, P, c.c_f,
      G.(\sendCtr_P, \recvCtr_P))$ to $\R'$ before the $\creturn$ statement of
      $\RecvTagOracle$.

      The adversary $\advA$ can only win if it is able to produce
      $\reportInfo_1, \reportInfo_2$ such that $\widetilde{G_1} \not\subseteq G$
      or $\widetilde{G_2} \not\subseteq G$ or $G_1 \not\approx G_2$, where $G_1
      \gets \verifyReport(\servSt, \reportInfo_1)$, $G_2 \gets
      \verifyReport(\servSt, \reportInfo_2)$, and $G$ is the ground truth graph
      maintained by the game. From $(\reportInfo_1, \reportInfo_2)$, we will
      show that we can break either the unforgeability of the MAC or the binding
      property of the commitment.
      
      We consider two cases: (1) the adversary wins with $\widetilde{G_1}$ or
      $\widetilde{G_2}$ not a sub-graph of $G$ or (2) the adversary wins with
      $G_1 \not\approx G_2$, but $\widetilde{G_1}, \widetilde{G_2} \subseteq G$.
      The first case will allow us to reduce to the EUF-CMA security of the MAC
      while the second allows us to reduce to the binding security of the
      commitment. Each case will correspond to a distinct failure event. Let
      $\game_1$ be the same as $\game_0$, except we abort and output 0, right
      before setting the $\winFlag$ flag, if $\advA$ produces a valid
      $(\reportInfo_1, \reportInfo_2)$ in case (1). We denote the event that
      this abort occurs as $F_1$. Let $\game_2$ be the same as $\game_1$ except
      we abort and output 0 at the same location if $\advA$ produces a valid
      $(\reportInfo_1, \reportInfo_2)$ in case (2). We denote the event that
      this abort occurs as $F_2$. Note that $|\Pr[\game_0(\advA) \Rightarrow 1]
      - \Pr[\game_1(\advA) \Rightarrow 1]| \leq \Pr[F_1]$ and
      $|\Pr[\game_1(\advA) \Rightarrow 1] - \Pr[\game_2(\advA) \Rightarrow 1]|
      \leq \Pr[F_2]$. Furthermore, $\Pr[\game_2(\advA) \Rightarrow 1] = 0$, so
      we have $\advantageTrInt(\advA) = \Pr[\game_0(\advA) \Rightarrow 1] \leq
      \Pr[F_1] + \Pr[F_2]$.

      We now demonstrate an adversary $\advB$ where $\advantageEUFCMA(\advB) =
      \Pr[F_1]$. The adversary $\advB$ perfectly simulates $\game_0$ to $\advA$,
      while routing $\mac$ and $\verify$ calls to its challenger oracles. If
      $F_1$ occurs, then we have that $\widetilde{G_i} \not\subseteq G$ for some
      $i \in \setelems{1, 2}$. For $r$ an element of $\reportInfo$, define $f(r)
      = \setelems{r.\sendTag.\ack, r.\recvTag.\ack}$. Observe that
      $\widetilde{G_i} \not\subseteq G$ implies there is some $r^* = (P^*, m^*,
      k_f^*, c_f^*, t_s^*, t_r^*) \in \reportInfo_i$, such that $f(r^*)
      \not\subseteq \R'$.
      
      To see why this is, observe that the construction mirrors the updates of
      the causality graph perfectly. Put formally, if $\bigcup_{r' \in
      \reportInfo} f(r') \subseteq \R'$ and $G' \gets
      \verifyReport(\reportInfo)$, then $\widetilde{G'} \subseteq G$. We have
      that $G.(\sendCtr_P, \recvCtr_P, \sendCtr_\barP, \recvCtr_\barP) = (0, 0,
      0, 0)$ and the server counters $(\sendCtr_P, \recvCtr_P, \sendCtr_\barP,
      \recvCtr_\barP) = (0, 0, 0, 0)$ at the start of the game. When
      $\SendTagOracle$ is called, we increment $G.\sendCtr_P$ and
      $\servSt.\sendCtr_P$. Similarly, when $\RecvTagOracle$ is invoked, we
      increment $G.\recvCtr_P$ and $\servSt.\recvCtr_P$. A simple proof by
      induction on the number of oracle calls shows that $G.(\sendCtr_P,
      \recvCtr_P) = \servSt.(\sendCtr_P, \recvCtr_P)$ for $P \in \bits$ by the
      end of each call to $\SendTagOracle$ and $\RecvTagOracle$. This means that
      $v \in \widetilde{G'}.V$ implies $v \in G.V$ and $e \in \widetilde{G'}.E$
      implies $e \in G.E$.
      
      If $F_1$ occurs, then we retrieve the $r^* =
      (P^*, m^*, k_f^*, c_f^*, t_s^*, t_r^*)$ in question, and observe that
      $\verify(\macKey,\allowbreak t_s^*.\ack,\allowbreak t_s^*.\tagLabel) = 1$
      and $\verify(\macKey,\allowbreak t_r^*.\ack,\allowbreak t_r^*.\tagLabel) =
      1$, because the output of $\verifyReport$ is not $\bot$. We must have that
      at least one of $t_r^*.\ack$ or $t_s^*.\ack$ was not queried to the MAC
      challenger oracle, otherwise both would be in $\R'$. Let $t^*$ denote this
      tag. We output $(t^*.\ack, t^*.\tagLabel)$ as a forgery. 

      Now, we construct adversary $\advC$ where $\advantageVBind_{\commScheme}(\advC) =
      \Pr[F_2]$. If $F_2$ occurs, we have that the adversary $\advA$ was able to
      trigger $G_1 \not\approx G_2$ while $\widetilde{G_1}, \widetilde{G_2}
      \subseteq G$. By \lemmaref{lemma:consistency} there exists $v_1 \in G_1$ and $v_2 \in
      G_2$ such that $v_1[0:2] = v_2[0:2]$ but $v_1.m \neq v_2.m$. Note that
      there must also be a single $c_f$ and $k_f^{(1)}, k_f^{(2)}$ such that
      $\VerC(v_1.m, k_f^{(1)}, c_f) = 1$ and $\VerC(v_2.m, k_f^{(2)}, c_f) = 1$.
      This breaks the binding property of the commitment, and so $\advC$ outputs
      $(v_1.m, k_f^{(1)}, v_2.m, k_f^{(2)}, c_f)$ to win with probability
      $\Pr[F_2]$. This completes the proof.
      \qed
\end{proof}

We now show that our scheme also achieves perfect reportability.
\begin{theorem}
      For all adversaries $\advA$, we have $\advantageRep_{\tfScheme}(\advA) =
      0$.
\end{theorem}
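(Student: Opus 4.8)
The plan is to show that $\winFlag$ can never be set in $\gameRep_{\tfScheme}$, regardless of $\advA$'s behaviour and coins, which yields advantage exactly $0$. Since $\winFlag$ is only touched inside $\ReportOracle$, it suffices to prove that for every query $\reportInfo$ satisfying the second conjunct of the win condition, $\reportInfo \subseteq \recvMessages$, the verifier returns a non-$\bot$ graph $G'$ with $G' \subseteq G$, so that $G' \not\subseteq G$ fails. I would split this into two claims: (i) $\verifyReport(\servSt, \reportInfo) \neq \bot$, and (ii) the graph it produces is contained in the ground-truth graph $G$ maintained by the game.

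For claim (i), I would argue that every per-element check inside $\verifyReport$ passes for a tuple $(P, m, k_f, c_f, \sendTag, \recvTag) \in \recvMessages$. Such a tuple is only ever inserted during a $\RecvTagOracle$ call on which $\Rcv$ returned $m \neq \bot$; by the definition of $\Rcv$ in \figref{fig:tf-dm-const} this already forces $\VerC(m, k_f, c_f) = 1$ (with $c_f = c.c_f$), so the commitment check is satisfied verbatim. The two tags are honest outputs of $\tagSend$ and $\tagRecv$ under the game's key $\macKey$, so perfect MAC correctness gives $\Ver(\macKey, \sendTag.\ack, \sendTag.\tagLabel) = \Ver(\macKey, \recvTag.\ack, \recvTag.\tagLabel) = 1$; and the structural checks (that $\sendTag$ carries type $S$, that $\recvTag$ carries type $R$, and that the two acks share the same $c_f$) hold by construction of $\tagSend$ and $\tagRecv$. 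Hence $b = 1$ for every element and $\verifyReport$ does not abort. I would flag here that perfect reportability relies on perfect correctness of the MAC and commitment; any correctness slack would reappear as a (negligible) additive term.

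For claim (ii), the key device is the counter-synchronization invariant already used in the proof of \thmref{thm:dm-tr-int}: at every point of the execution, $\servSt.(\sendCtr_P, \recvCtr_P) = G.V_P.(\sendCtr, \recvCtr)$ for both $P \in \bits$, shown by induction over oracle calls, since each $\tagSend$/send-addition pair and each $\tagRecv$/reception-addition pair advances a server counter and the matching graph counter in lockstep (the reception pair firing exactly when delivery succeeds). Given the invariant, I would trace each reported tuple back to the moment its tags were minted: $\sendTag$ was produced by the $\TagSendOracle$ call that appended the send vertex to $G$, and its stored counters equal that vertex's counters; likewise $\recvTag$ was produced by the $\RecvTagOracle$ call that appended the reception vertex together with the edge joining them. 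Therefore the vertices $v_s, v_r$ and edge $(v_s, v_r)$ that $\verifyReport$ reconstructs from the acks coincide with a vertex pair and an edge already present in $G$ (with the label $m$ inherited consistently on both endpoints), so taking the union over all elements of $\reportInfo$ gives $G' \subseteq G$.

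The main obstacle I expect is bookkeeping discipline rather than a deep idea: one must track carefully which party each ack's counters belong to (the send tag carries the sender $\barP$'s counters while the reception tag carries the receiver's), verify that the induction's lockstep never desyncs---in particular that the ground-truth graph advances a reception only on successful delivery---and confirm that treating $\reportInfo$ as a set causes no collisions, since distinct successfully-received events carry distinct counter coordinates. Establishing the invariant cleanly and matching the reconstructed vertices to it is the crux; the verification-success argument is then routine given perfect correctness of the underlying primitives.
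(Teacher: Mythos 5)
Your proposal is correct, and its first half (claim (i)) is exactly the paper's argument: the commitment check in $\Rcv$ is replicated in $\verifyReport$, the MAC tags are honest outputs of $\tagSend$ and $\tagRecv$ under the game's own key, and the structural checks hold by construction, so $\verifyReport$ never returns $\bot$ on a report drawn from $\recvMessages$. Where you go further is claim (ii). The paper's proof of this theorem stops after arguing non-$\bot$, even though the win condition in $\gameRep_{\tfScheme}$ is $G' \not\subseteq G$, which is strictly stronger than $G' \neq \bot$; the containment $G' \subseteq G$ is only justified implicitly, by pointing (in the surrounding text on correctness) to the counter-mirroring invariant established inside the proof of \thmref{thm:dm-tr-int}. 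You make that dependence explicit: the lockstep invariant $\servSt.(\sendCtr_P, \recvCtr_P) = G.V_P.(\sendCtr, \recvCtr)$, proved by induction over oracle calls, guarantees that the vertices and edge reconstructed by $\verifyReport$ from each pair of acknowledgements coincide with ones already present in the ground-truth graph. This buys a self-contained proof that actually discharges the full win condition as written, at the cost of restating an invariant the paper proves elsewhere. Your side remarks are also apt: perfect advantage $0$ does rely on perfect correctness of the MAC and commitment scheme, and one does need the game to add reception events to $G$ only on successful delivery for the lockstep to hold (as in the correctness game), which is the intended reading.
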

\begin{proof}
      Observe that the check that an honest recipient performs in $\Rcv$,
      namely, that $\VerC(m, k_f, c.c_f) = 0$, is replicated in $\verifyReport$.
      The only way $\verifyReport$ can return $\bot$ is if this check fails, if
      any of the MAC checks fail, or if the input $\reportInfo$ is malformed.
      Since we are dealing with an honest reporter, this cannot be the case, so
      $\verifyReport$ must always return a non-$\bot$ value.
      \qed
\end{proof}

Our scheme achieves correctness via the correctness of the underlying channel
$\channel$, the correctness of the MAC scheme $(\mac, \verify)$, the correctness
of the commitment scheme $(\Com, \VerC)$, and the fact that the counters in our
construction perfectly mirror those of the ground truth graph (see the proof of
\thmref{thm:dm-tr-int}). Observe that our
construction boils down to a commit-then-encrypt scheme, which was proven secure
for the multi-opening real-or-random confidentiality notion in \cite{smfComm},
hence we omit the proof of confidentiality here.

\section{Multi-party Transcript Franking}
\label{sec:groupmsg}
Up to this point, our constructions have considered transcript franking in
the two-party direct messaging context. We now discuss how our approach
generalizes to an arbitrary number of parties.
A group consists of a set of $N$ parties $\{0, \ldots, N-1\}$. The goal
of a group transcript franking construction is to be able to reconstruct a
causality graph like that shown in \figref{fig:group-caus-graph}. Note that,
unlike the two-party setting, one send event can correspond to multiple
reception events, as there are now multiple recipients. Each edge
corresponds to a single copy of the broadcast message. We build on the
multi-party channel communication graph formalism proposed in \cite{mar17},
adapting it to our causality graph abstraction.

\begin{figure}[t]
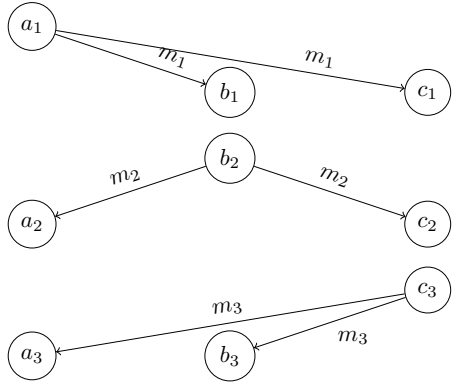

    \centering
    \resizebox{0.5\textwidth}{!}{
        \tikz \graph [math nodes, nodes={circle, draw}, no placement] {
            a_1[x=0 cm,y=5 cm] ->[near end, sloped, above, "$m_1$"] b_1[x=3 cm,y=4 cm];
            a_1 ->[near end, sloped, above, "$m_1$"] c_1[x= 6 cm,y=4 cm];
            b_2[x=3 cm, y=3 cm] ->[sloped, above, "$m_2$"] a_2[x=0 cm,y= 2 cm];
            b_2 ->[sloped, above, "$m_2$"] c_2[x= 6 cm, y=2 cm];
            c_3[x=6 cm, y=1 cm] ->[sloped, above, "$m_3$"] a_3[x= 0 cm, y = 0 cm];
            c_3 ->[above, "$m_3$"] b_3[x=3 cm, y=0 cm];

        };
    }
    \caption{Example of group causality graph }
    \label{fig:group-caus-graph}
\end{figure}

\paragraph{Causality graphs for group messaging}
For $N$-party communication, a causality graph is an $N$-partite graph $G = (V,
E)$, where $V = \bigcup_{i\in[N]} V_i$. All vertex sets $V_i$ for $i \in [N]$
are disjoint. The edge set $E$ consists of pairs $(v, v')$ where $v \in V_i$ and
$v' \in V_j$, where $i, j \in [N]$ and $i \neq j$. The vertex space, as before,
is $\setelems{S, R} \times \Z^{*} \times \Z^{*} \times \msgSp$. The notation
and updates for adding a send event is the same as the two party version. For
adding a reception event to party $P_R$ from party $P_S$, we write $G \gets G +
(\recvOp, P_S, P_R, c.i)$, incrementing the reception counter for $P_R$ before
adding the vertex. We then add an edge between the sending vertex and the new
reception vertex, as before. The partial ordering over events is given by the
transitive closure over the total orders for each vertex set and the edge
relation. The total order within each vertex set is given by the lexicographic
ordering over $v.(\sendCtr, \recvCtr)$ for $v\in V_P$. Event contiguity and gaps
can be interpreted similarly as the two-party case, as described in
\secref{sec:prelim}.

\paragraph{Group messaging channels}
A group messaging channel $\channel$ is defined as the tuple $\channel = (\init,
\Snd, \Rcv)$. The main difference here is that instead of $P \in \bits$, we have
that $P \in [N]$, and our reception procedure accepts the identity of the
sending party $P_S$ associated with the ciphertext $c$.
\begin{itemize}
    \item $\st \getsr \clInit(P, k)$ outputs initial client state $\st \in
          \stSp_C$ for a new channel for Party $\party \in [N]$ and a key $k \in
          \keySp$.
    \item $\st', c \getsr \Snd(P, \st, m)$ is a client procedure that
    produces a ciphertext $c \in \ctSp$ corresponding to an input message $m\in \msgSp$, and
    an updated client state $\st' \in \stSp_C$.
    \item $\st', m, i \gets \Rcv(P_R, \st, P_S, c)$ is a client procedure that
          processes a received ciphertext $c$ from party $P_S \in [N]$ to party
          $P_S \in [N]$ (where $P_S \neq P_R$) and decrypts it to a message $m
          \in \msgSp \cup \setelems{\bot}$
          with sending index $i \in \Z^*$. The message $m$ can be $\bot$ if decryption
          fails.
\end{itemize}

\paragraph{Group transcript franking syntax and semantics}
The $\init$, $\Snd$, and $\Rcv$ procedures inherit the syntactic changes
discussed above, and our $\tagRecv$ procedure accepts an additional argument for
the sending party $P_S \in [N]$. This additional argument is required in the
group case since a client can receive a message from more than one party.
We inherit notational conventions from \secref{sec:two-party}.

\begin{itemize}
    \item $\servSt \getsr \srvInit(N)$ outputs initial server state $\servSt
    \in \stSp_S$ for an $N$-party group.
    \item $\st \getsr \clInit(P, k)$ outputs initial client state $\st \in \stSp_C$ for a new
          channel for Party $\party \in [N]$ and a key $k \in \keySp$.
    \item $\st', c \getsr \Snd(P, \st, m)$ is a client procedure that
          produces a ciphertext $c \in \ctSp$ corresponding to an input message
          $m \in \msgSp$, and
          an updated client state $\st' \in \stSp_C$.
    \item $\servSt', \sendTag \gets \tagSend(\servSt, P_S, c_f)$ is a
          server procedure that produces a tag $\sendTag \in \tagSp_S$ for a message
          sending event, where $P_S \in [N]$ is the sending party, $c_f \in \commSp$ is the
          franking tag for the message, and $\servSt' \in \stSp_S$ is the updated server
          state.
    \item $\servSt, \recvTag \gets \tagRecv(\servSt, P_S, P_R, c_f)$ is a server
          procedure that produces a tag $\recvTag\in \tagSp_R$ for a message
          reception event by receiving party $P_R \in [N]$ for a message sent by
          party $P_S \in [N]$. This procedure is invoked only when the receiving
          client indicates that the message was successfully received and valid.
    \item $\st', m, k_f, i \gets \Rcv(P_R, \st, P_S, c)$ is a client procedure
          that processes a received ciphertext $c \in \ctSp$ and decrypts it to
          a message $m \in \msgSp \cup \setelems{\bot}$ and a franking key $k_f
          \in \keySp_f$. The message $m$ can be $\bot$ if decryption fails.
    \item $G \gets \verifyReport(\st_S, \reportInfo)$ takes as input the server
          state $\st_S \in \stSp_S$ as well as a client-provided report
          $\reportInfo$, which is a set of tuples of the form $(P_S, P_R, m,
          k_f, c_f, \sendTag, \recvTag)$. This procedure verifies the report
          and, if the report is valid, produces a causality graph $G \in
          (\vertexSpace \times \edgeSpace) \cup \setelems{\bot}$ for the
          messages contained within the report. If the reporting information is
          invalid, the procedure outputs $\bot$.
\end{itemize}

\paragraph{Security definitions}
Our security notions in the group setting are a natural extension of those
for the two-party setting.
The main difference is that $N$ parties are initialized and any of these parties
can send and receive messages within the same channel. Correctness and
confidentiality definitions generalize in a straightforward manner, hence we
omit full descriptions of them for brevity. We present our group transcript
reportability definition in \figref{fig:group-reportability} and our group
transcript integrity definition in \figref{fig:group-tr-integ}. To denote the
advantage of an adversary $\advA$ in the $N$-party reportability game, we write
$\advantageRep_{\tfScheme, N}(\advA)$. Similarly, $\advantageTrInt_{\tfScheme,
N}(\advA)$ is the advantage of $\advA$ in the $N$-party transcript integrity
game.

\begin{figure*}[t]
    \centering
    \fhpagesss{0.23}{0.37}{0.33}{
        \procedurev{$\gameRep_{\tfScheme, N}(\advA)$}\\
        $\serverKey \getsr \keySp$\\
        $\st_\advA, \channelKey \getsr \advA()$\\
        $\winFlag \gets 0$; $\R \gets \setelems{}$\\
        $\st_S \getsr \srvInit(\serverKey)$\\
        For $i \in [N]$\\
        \ind $\st_i \getsr \clInit(i, \channelKey)$\\
        $\R_t, \recvMessages \gets \setelems{}, \setelems{}$\\
        $\advA^\calO(\st_\advA, \channelKey)$\\
        $\creturn \winFlag$
    }{

        \procedurev{$\calO.\RecvTagOracle(P_R, P_S, c, \sendTag)$}\\
        Assert $(P_S, c.c_f, \sendTag) \in \R_t$\\
        Assert $(P, c, \sendTag) \not\in \R$\\
        Add $(P, c, \sendTag)$ to $\R$\\
        $\st_P, m, k_f, i \gets \Rcv(P, \st_P, c)$\\
        $\cif m \neq \bot \cthen$\\
        \ind$\st_S, \recvTag \getsr \tagRecv(\st_S, P_R, P_S, c_f)$\\
        \ind Add $(P_S, P_R, m, k_f, c_f, t_s, t_r)$\\\ind\ind to $\recvMessages$\\
        $\creturn m, k_f, \sendTag, \recvTag$\\

    }{
        \procedurev{$\calO.\TagSendOracle(P, c_f)$}\\
        $\st_S, \sendTag \gets \tagSend(\st_S, P, c_f)$\\
        Add $(P, c_f, \sendTag)$ to $\R_t$\\
        $\creturn \sendTag$\\

        \procedurev{$\calO.\ReportOracle(\reportInfo)$}\\
        Assert $|\reportInfo| > 0$\\
        $G' = \verifyReport(\servSt, \reportInfo)$\\
        $\cif G' = \bot \land \reportInfo \subseteq \R_r$:\\
        \ind $\winFlag \gets 1$
    }

    \caption{The security game for transcript reportability for $N$-party
    messaging.
    }
    \label{fig:group-reportability}
\end{figure*}

\begin{figure*}[t]
    \centering
    \fhpagesss{0.23}{0.31}{0.38}{
        \procedurev{$\gameTrInt_{\tfScheme, N}(\advA)$}\\
        $\serverKey \getsr \keySp$; $\winFlag \gets 0$\\
        $\st_\advA, \channelKey \getsr \advA()$\\
        $\st_S \getsr \srvInit(\serverKey)$\\
        For $i \in [N]$\\
        \ind $\st_i \getsr \clInit(i, \channelKey)$\\
        $G, \R_t, \R \gets \varepsilon, \setelems{}, \setelems{}$\\
        $\advA^\calO(\st_\advA, \channelKey)$\\
        $\creturn \winFlag$
    }{
        \procedurev{$\calO.\SendTagOracle(P, m, c, k_f)$}\\
        $G \gets G + (\texttt{S}, P, m)$\\
        $\st_S, \sendTag \gets \tagSend(\st_S, P, c.c_f)$\\
        Add $(P, c, \sendTag)$ to $\R_t$\\
        $\creturn \sendTag$
    }{
        \procedurev{$\calO.\RecvTagOracle(P_R, P_S, c, \sendTag)$}\\
        Assert $(\barP, c, \sendTag) \in \R_t$\\
        Assert $(P, c, \sendTag) \not\in \R$\\
        Add $(P, c, \sendTag)$ to $\R$\\
        $\st_S, \recvTag \gets \tagRecv(\st_S, P_R, P_S, c.c_f)$\\
        $G \gets G + (\texttt{R}, P_S, P_R, c.i)$\\
        $\creturn \recvTag$\\

        \procedurev{$\calO.\ReportOracle(\reportInfo_1, \reportInfo_2)$}\\
        Assert $|\reportInfo_1| > 0$ and $|\reportInfo_2| > 0$\\
        $G_1 \gets \verifyReport(\servSt, \reportInfo_1)$\\
        $G_2 \gets \verifyReport(\servSt, \reportInfo_2)$\\
        $\cif G_1 \neq \bot \land G_2 \neq \bot \land\\
        \ind ((G_1 \not\subseteq G) \lor (G_2 \not\subseteq G)\\\ind \lor (G_1
        \not\approx G_2))$:\\
        \ind $\winFlag \gets 1$
    }
    \caption{The security game for group transcript integrity for $N$-party messaging.}
    \label{fig:group-tr-integ}
\end{figure*}

\paragraph{Our construction}
The group messaging transcript franking construction generalizes naturally from
the two-party version. We provide a pseudocode specification of our group
transcript franking protocol in \figref{fig:tf-gm-const}. Counter updates happen
nearly identically in $\tagSend$ and $\tagRecv$, except now $N$ pairs of
counters are maintained, one for each party.

\begin{figure*}[t]
    \centering
    \fhpages{0.45}{
        \procedurev{$\srvInit(N)$}\\
        $\macKey \getsr \keySp$\\
        For $i\in[N]$ \textbf{do} $\sendCtr_i, \recvCtr_i \gets 0, 0$\\
        $\creturn \{\macKey\} \cup \setelems{\sendCtr_i, \recvCtr_i}_{i\in[N]}$\\

        \procedurev{$\clInit(P, k)$}\\
        $\creturn \channel.\init(P, k)$\\

        \procedurev{$\Snd(P, \st, m)$}\\
        $(k_f, c_f) \gets \Com(m)$\\
        $(\st.\channelSt, c_e) \getsr \channel.\Snd(\party,
        \st.\channelSt, (m, k_f))$\\
        $\creturn \st, (c_e, c_f)$\\

        \procedurev{$\tagSend(\servSt, P, c_f)$}\\
        $\sendCtr_P \gets \sendCtr_P + 1$, $\ack \gets (S, P, c_f,
            \sendCtr_P, \recvCtr_P)$\\
        $\sendTag \gets (\ack, \mac(\macKey, \ack))$\\
        $\creturn \st_S, \sendTag$\\

        \procedurev{$\tagRecv(\servSt, P_R, P_S, c_f)$}\\
        $\recvCtr_{P_R} \gets \recvCtr_{P_R} + 1$\\
        $\ack \gets (R, P_S, P_R, c_f,
        \sendCtr_{P_R}, \recvCtr_{P_R})$\\
        $\recvTag \gets (\ack, \mac(\macKey, \ack))$\\
        $\creturn \st_S, \recvTag$
    }{

        \procedurev{$\Rcv(P_R, \st, P_S, c)$}\\
        $(\st.\channelSt, m, k_f, i) \gets
            \channel.\Rcv(P_R, \channelSt, P_S, c)$\\
        $\cif m = \bot \lor \VerC(m, k_f,
            c.c_f) = 0 $:\\
        \ind $\creturn \bot$\\
        $\creturn \st, m, k_f, i$\\

        \procedurev{$\verifyReport(\servSt, \reportInfo)$}\\
        Initialize empty graph $G$\\
        For $(P_S, P_R, m, k_f, c_f, \sendTag, \recvTag)$ in $\reportInfo$:\\
        \ind $b \gets \Ver(\macKey, \sendTag.\ack, \sendTag.\tagLabel) \land\\
        \ind\ind \Ver(\macKey, \recvTag.\ack,
        \recvTag.\tagLabel)\land$\\
        \ind\ind$\VerC(m, k_f, c.c_f)\land\\\ind\ind \sendTag[0] = S \land \recvTag[0] =
        R\land$\\\ind\ind $\sendTag.c_f = \recvTag.c_f$\\
        \ind $\cif b = 0$:\\
        \ind \ind $\creturn \bot$\\
        \ind $\sendCtr_P, \recvCtr_P = \sendTag.\ack.(\sendCtr, \recvCtr)$\\
        \ind $\sendCtr_{P_R}, \recvCtr_{P_R} = \recvTag.\ack.(\sendCtr, \recvCtr)$\\
        \ind $v_s = (S, \sendCtr_P, \recvCtr_P, m)$\\
        \ind $v_r = (R, \sendCtr_{P_R}, \recvCtr_{P_R}, m)$\\
        \ind $\cif v_s \not\in G.V_P$\\
        \ind\ind  Add $v_s$ to $G.V_P$\\
        \ind Add $v_r$ to $G.V_{P_R}$, add $(v_s, v_r)$ to $G.E$\\
        $\creturn G$
    }
    \caption{Pseudocode for our $N$-party transcript franking construction.}
    \label{fig:tf-gm-const}
\end{figure*}

\paragraph{Security analysis}
The security analysis of our group construction closely mirrors that of the
two-party construction. As with the two-party construction, our group
construction achieves perfect reportability because the $\Rcv$ procedure
performs the same commitment checks as the $\verifyReport$ procedure. Below, we
prove the transcript integrity of our scheme.

\begin{theorem}      \label{thm:gm-tr-int} Let $\tfScheme$ be the group
      transcript franking scheme given in \figref{fig:tf-gm-const}. Let $\advA$
      be an $N$-party group transcript integrity adversary against $\tfScheme$.
      Then we give an EUF-CMA adversary $\advB$ and V-Bind adversary $\advC$ such
      that
          \begin{align*}
                \advantageTrInt_{\tfScheme, N}(\advA) \leq \advantageEUFCMA_{\mathsf{MAC}}(\advB)
                 + \advantageVBind_{\mathsf{CS}}(\advC)\;.
          \end{align*}
          Adversaries $\advB$ and $\advC$ run in time that of $\advA$ plus a small
          overhead.
    \end{theorem}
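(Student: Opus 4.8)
The plan is to mirror the proof of \thmref{thm:dm-tr-int}, adapting the bookkeeping and the two reductions to the $N$-party, broadcast setting. I would begin with $\game_0 \equiv \gameTrInt_{\tfScheme, N}$ and add the same style of auxiliary state: an empty set $\R'$ that, just before each oracle returns, records the genuine acknowledgement tuple produced by the server. For $\SendTagOracle(P, m, c, k_f)$ this is $(S, P, c.c_f, \sendCtr_P, \recvCtr_P)$ and for $\RecvTagOracle(P_R, P_S, c, \sendTag)$ it is $(R, P_S, P_R, c.c_f, \sendCtr_{P_R}, \recvCtr_{P_R})$, matching the $\ack$ fields that the group $\tagSend$ and $\tagRecv$ sign. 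Since $\SendTagOracle$ now receives $m$ and $k_f$, I would additionally log the opening pair $(c.c_f, m, k_f)$ so that a recorded opening is available for the binding reduction. The first structural fact I need is the $2N$-counter generalization of the counter-mirroring invariant: by induction on the number of oracle calls, for every party $P$ the server counters $\servSt.(\sendCtr_P, \recvCtr_P)$ equal the ground-truth counters $G.V_P.(\sendCtr, \recvCtr)$ at the end of each call. This is unchanged in spirit because each $\tagSend$ bumps exactly one sender counter and each $\tagRecv$ bumps exactly one receiver counter, even though a single send may later be acknowledged by several receivers. The consequence, as before, is that if every send- and reception-acknowledgement in a report already lies in $\R'$, then the message-excluded output $\widetilde{G'}$ of $\verifyReport$ satisfies $\widetilde{G'} \subseteq \widetilde{G}$.

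Next I would perform the case split. Let $F_1$ be the event that the winning pair $(\reportInfo_1, \reportInfo_2)$ contains some acknowledgement — a $\sendTag.\ack$ or a $\recvTag.\ack$ — that is not in $\R'$, and let $F_2$ be the complementary winning event in which every acknowledgement of both reports lies in $\R'$. Game hops that abort on $F_1$ and then on $F_2$ give $\advantageTrInt_{\tfScheme,N}(\advA) \le \Pr[F_1] + \Pr[F_2]$. For $\Pr[F_1]$ I would build $\advB$ that simulates $\game_0$ while routing all $\mac$ and $\Ver$ calls to its EUF-CMA oracles; since $\verifyReport$ returned a non-$\bot$ graph, both tags of the offending report verify, so the acknowledgement absent from $\R'$ is a never-queried message whose tag verifies, and $\advB$ outputs it as a forgery, giving $\advantageEUFCMA_{\macScheme}(\advB) = \Pr[F_1]$.

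For $\Pr[F_2]$ I would build the V-Bind adversary $\advC$. When $F_2$ holds, all acknowledgements are genuine, so by the mirroring fact $\widetilde{G_1}, \widetilde{G_2} \subseteq \widetilde{G}$; in particular $\widetilde{G_1} \approx \widetilde{G_2}$. Because the ground-truth graph is valid and message-inclusive, a win can no longer be purely structural: it must localize to a vertex whose reported opening of a commitment disagrees with the message recorded for that same $c_f$ and counter position. Concretely, for the $G_j \not\subseteq G$ disjunct the genuine send acknowledgement pins a unique $c_f$ to its counter position, the log holds $(c_f, m, k_f)$ from the $\SendTagOracle$ call that produced that acknowledgement, and the report supplies a verified opening of the same $c_f$ to a different message; for the $G_1 \not\approx G_2$ disjunct I would invoke the $N$-partite generalization of \lemmaref{lemma:consistency} to extract two vertices sharing $v[0{:}2]$ but carrying distinct messages, hence two verified openings of one $c_f$. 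Either way $\advC$ outputs the colliding tuple, so $\advantageVBind_{\commScheme}(\advC) = \Pr[F_2]$, and the two bounds compose to the claimed inequality.

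The main obstacle I anticipate is not the MAC reduction but faithfully handling the broadcast structure together with lifting \lemmaref{lemma:consistency} to $N$-partite causality graphs. In the group $\verifyReport$ a single send vertex $v_s$ may be introduced by several report tuples (hence the guard $v_s \notin G.V_P$), so I must check that the counter-mirroring invariant and the vertex/edge membership argument are insensitive to these repeated insertions and to the fact that one $S$-event legitimately maps to many $R$-events. Re-proving the consistency lemma for $N$ parties is routine — its proof only reconstructs a valid message-inclusive graph by replaying the operation sequence and planting the recorded message at each position — but I would need to re-verify that uniqueness of the colliding vertex still holds when sends fan out to multiple receivers, since that uniqueness is exactly what lets $\advC$ name a single well-defined commitment collision.
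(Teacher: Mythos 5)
Your proposal is correct and follows essentially the same route as the paper's proof: the same bookkeeping set $\R'$, the same counter-mirroring invariant for the $2N$ counters, a MAC-forgery reduction when some reported acknowledgement is absent from $\R'$, and a binding reduction via the $N$-partite lifting of \lemmaref{lemma:consistency} otherwise. Your reorganization of the failure events around membership in $\R'$ (rather than around which disjunct of the win condition fires), and your explicit handling of the fact that the group game's ground-truth graph is message-inclusive, are minor refinements that if anything tighten a small imprecision the paper carries over from the two-party argument.
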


\begin{proof}
    We proceed as with the proof of the two-party construction. The games
    $\game_0$, $\game_1$, and $\game_2$ are defined as before, with a similar
    security argument. Here, we highlight notable differences in the group case.
    The additional bookkeeping $\R'$ is initialized to $\setelems{}$, as before,
    at the start of the game. Game $\game_0$ adds $(S, P_S, P_R, c.c_f,
    G.(\sendCtr_P, \recvCtr_P))$ to $\R'$ before the $\creturn$ statement of
    $\SendTagOracle$. Similarly, it adds $(R, P_S, c.c_f, G.(\sendCtr_P,
    \recvCtr_P))$ to $\R'$ before the $\creturn$ statement of $\RecvTagOracle$.
    The failure events $F_1$ and $F_2$ are defined as in the proof of
    \thmref{thm:dm-tr-int}.

    We now demonstrate an adversary $\advB$ where $\advantageEUFCMA(\advB) =
      \Pr[F_1]$. The adversary $\advB$ perfectly simulates $\game_0$ to $\advA$,
      while routing $\mac$ and $\verify$ calls to its challenger oracles. If
      $F_1$ occurs, then we have that $\widetilde{G_i} \not\subseteq G$ for some
      $i \in \setelems{1, 2}$. For $r$ an element of $\reportInfo$, define $f(r)
      = \setelems{r.\sendTag.\ack, r.\recvTag.\ack}$. Observe that
      $\widetilde{G_i} \not\subseteq G$ implies there is some $r^* = (P_S^*,
      P_R^*, m^*, k_f^*, c_f^*, t_s^*, t_r^*) \in \reportInfo_i$, such that
      $f(r^*) \not\subseteq \R'$.
      
      This results from the fact that $\bigcup_{r' \in \reportInfo} f(r')
      \subseteq \R'$ and $G' \gets \verifyReport(\reportInfo)$ implies
      $\widetilde{G'} \subseteq G$. We have that $G.(\sendCtr_P, \recvCtr_P) =
      (0, 0)$ for $P \in [N]$ and the server counters $\servSt.(\sendCtr_P,
      \recvCtr_P) = (0, 0)$ for $P \in [N]$ at the start of the game. When
      $\SendTagOracle$ is called, we increment $G.\sendCtr_P$ and
      $\servSt.\sendCtr_P$. Similarly, when $\RecvTagOracle$ is invoked, we
      increment $G.\recvCtr_P$ and $\servSt.\recvCtr_P$. This means that $v \in
      \widetilde{G'}.V$ implies $v \in G.V$ and $e \in \widetilde{G'}.E$ implies
      $e \in G.E$. This allows us to produce a forgery as shown in the proof for
      the two-party case.

      Now, we construct adversary $\advC$ where
      $\advantageVBind_{\commScheme}(\advC) = \Pr[F_2]$. If $F_2$ occurs, we
      have that the adversary $\advA$ was able to trigger $G_1 \not\approx G_2$
      while $\widetilde{G_1}, \widetilde{G_2} \subseteq G$. It is
      straightforward to see that the group version of
      \lemmaref{lemma:consistency} holds, so there exists $v_1 \in G_1$ and $v_2
      \in G_2$ such that $v_1[0:2] = v_2[0:2]$ but $v_1.m \neq v_2.m$. We
      proceed as in the proof of \thmref{thm:dm-tr-int} to produce a
      binding violation.\qed
\end{proof}

\section{Discussion and Extensions}
\label{sec:discussion}

\paragraph{Composing causality preservation with transcript franking}
Our work is primarily concerned with how malicious parties can interfere
with the reporting process. Causality preservation as proposed in
\cite{causalityPres} aims to model how parties in a messaging channel can obtain
consistent causality graphs in the presence of a malicious service provider.
Indeed, transcript franking can be instantiated with a channel that achieves
strong causality preservation to reap these benefits.

\paragraph{Reports with redacted messages}
We remark that our construction allows clients to report transmission patterns
of messages, via causality sub-graphs, without having to disclose every message
within the causality sub-graph. Doing so simply requires omitting the opening
key $k_f$ for the messages that a client wishes to redact within a report. In
Chen and Fischlin's construction of a message franking channel with causality,
this would not be possible as the causality metadata is committed to alongside
the plaintext message \cite{causalityPres}.

\paragraph{Malicious clients refusing acknowledgement}
In our protocol, clients indicate reception of a well-formed ciphertext to the
server before a reception event tag is created. Malicious clients may refuse to
make this acknowledgement to the server, preventing the sender of that message
from being able to report it. We can mitigate this via notifying senders of
message delivery and recipient validation separately, thereby flagging malicious
behavior in-band. Upon detecting this behavior, a client may refuse to further
communicate with the misbehaving client, but we do not yet support
cryptographically reporting this misbehavior.

To explain in more detail, when $P$ sends a message to $\barP$, there are three
key events in the course of message transmission: (1) the reception of the
message sent from $P$ by the platform server, (2) the reception of the message by
$\barP$, and (3) platform reception of a valid message acknowledgement from $\barP$.

If $\barP$ is malicious, it could refuse to indicate the validity of the
message, omitting step (3) as described above. When (an honest) $P$ notices that
only events (1) and (2) occurred for a particular message, while $\barP$
continues to send and acknowledge subsequent messages, it knows that $\barP$ is
acting in an aberrant manner and will halt interactions with $\barP$ (tear down
the conversation and alert the user). So detection of deviation is built into
our protocol. But our protocol does not enable $P$ to cryptographically prove to
a moderator that $\barP$ misbehaved in the way described above. A messaging
system might trust client software to report such misbehavior, but
cryptographically secure reporting of this class of misbehavior is an open
question.

The above issue is analogous to an honest client receiving a message with a
malformed franking tag in the standard single-message franking setting. Here the
recipient should drop the message, but cannot cryptographically prove to the
moderator that the sender sent a malformed ciphertext.

One might wonder if the key-committing aspect of the encryption scheme can come
to the rescue here: if a ciphertext can be decrypted only under one key, then
the sender needs to simply reveal the key in order to show the ciphertext is
well-formed. In messaging protocols, these keys are often intended for one-time
use and revealing them does not compromise forward-secrecy or post compromise
security. The issue is that there is no straightforward way to prove that the
recipient should have been able to derive a particular key from the ratcheting
protocol. Hence, proving that a message is decryptable is not sufficient to show
that the recipient should have been able to decrypt it. Designing protocols that
allow for proving such statements is an interesting future direction.

\paragraph{Deployment considerations}
Our definitions consider single conversations, however our constructions can be
parallelized in a straightforward manner for multiple conversations. Indeed, the
same server MAC key can be used, and as \cite{fbSecretConv} suggests,
periodically rotated. Tags will additionally have to include conversation
specific identifiers in order to ensure that messages cannot be falsely reported
across conversations. This would amount to appending the identifier $\convId$ to
$\ack$ before tagging it in $\tagSend$ and $\tagRecv$. Instead of using
numerical indices to identify parties, one might use unique user identifiers.
This is especially important for groups as membership can change over time,
hence so can the mapping between party indexes and user identities within a
group. Presenting causality information in a user-friendly way to messaging
parties and content moderators is an open question, which was also raised in
\cite{causalityPres}. Concretely, both the MAC and the commitment scheme can be
instantiated with HMAC-SHA-256 \cite{hmac}. A drawback of our proposed
construction is that the server must maintain counters for each party in each
channel. At scale, keeping track of this state can be prohibitive. In the
remainder of this section, we describe a modification of our scheme that
mitigates this issue.

\paragraph{Outsourced-storage transcript franking}
We now propose mechanisms for allowing clients to store the counters while the
server verifies how it is updated. In our no-server-storage solution, clients
send these counters in the associated data of their messages. We present a
solution for two-party transcript franking with outsourced storage in the
remainder of this section. In \appref{app:outsrc-security}, we provide a
security analysis of this solution and outline its generalization to group
transcript franking. We present pseudocode for our outsourced construction in
\figref{fig:o-tf-dm-const}, highlighting the procedures that differ from the
server-storage version.

Formally, an outsourced two-party transcript franking
scheme is a tuple of algorithms $\otf = (\srvInit,\allowbreak \init,\allowbreak
\Snd,\allowbreak \Rcv,\allowbreak \tagSend,\allowbreak \tagRecv,\allowbreak
\judge,\allowbreak \reportReplay)$, defined over a server state space $\stSp_S$,
a client state space $\stSp_C$, a key space $\keySp$, a message space $\msgSp$,
a commitment space $\commSp$, a franking key space $\keySp_f$, an initialization
tag space $\tagSp_I$, a message-sent tag space $\tagSp_S$, and a reception tag
space $\tagSp_R$. We detail these algorithms below:

\begin{itemize}
      \item $\servSt, t^{(0)}_0, t^{(1)}_0 \getsr \srvInit()$ outputs an initial
      server state $\servSt \in \stSp_S$, along with initialization tags
      $t^{(0)}, t^{(1)} \in \tagSp_I$ for each party.
      \item $\st \getsr \clInit(P, k)$ is defined as in \secref{sec:two-party}.
      \item $\st', c \getsr \Snd(P, \st, m)$ is defined as in
      \secref{sec:two-party}.
      \item $\servSt', \sendTag \gets \tagSend(\servSt, P, c_f, t)$ is a server
            procedure that produces a tag $\sendTag \in \tagSp_S$ for a message
            sending event, where $P$ is the sending party, $c_f$ is the franking
            tag, and $t \in \tagSp_S \cup \tagSp_R \cup
            \tagSp_I$ is the last tag issued for $P$.
      \item $\servSt, \recvTag \gets \tagRecv(\servSt, P, c_f, t)$ is a server
            procedure that produces a tag $\recvTag \in \tagSp_R$ for a message
            reception event by receiving party $P$. As with $\tagSend$, $t \in
            \tagSp_S \cup \tagSp_R \cup \tagSp_I$ is the last tag issued for $P$.
      \item $\st', m, k_f, i \gets \Rcv(\party, \st, c)$ is defined as in
      \secref{sec:two-party}.
      \item $G \gets \verifyReport(\st_S, \reportInfo)$ is defined as in
      \secref{sec:two-party}.
      \item $P \gets \reportReplay(\servSt, t, t')$ is a server procedure that
      takes as input two tags $t, t' \in \tagSp_S \cup \tagSp_R \cup \tagSp_I$.
      It outputs a party $P \in \bits$ if it determines that the tags constitute
      a replay attempt by $P$, or $\bot$ if no replay attempt is detected.
  \end{itemize}

\begin{figure*}[t]
    \centering
    \fhpages{0.47}{
        \procedurev{$\srvInit()$}\\
        $\macKey \getsr \keySp$; $\sendCtr_0, \recvCtr_0, \sendCtr_1, \recvCtr_1
        \gets (0, 0, 0, 0)$\\
        For $P \in \bits$\\
        \ind $t_0^{(P)}.\ack = (\texttt{Init}, P, \bot, 0, 0)$\\
        \ind $t_0^{(P)}.\tagLabel = \mac(\macKey, t_0^{(P)}.\ack)$\\
        $\creturn \{\macKey, \sendCtr_0, \recvCtr_0, \sendCtr_1, \recvCtr_1\}$,
        $t_0^{(0)}, t_0^{(1)}$\\

        \procedurev{$\tagSend(\servSt, P, c_f, t)$}\\
        $\cif \Pi(t) \neq P \lor \verify(\macKey, t.\ack, t.\tagLabel) = 0$:\\\ind $\creturn
        \servSt, \bot$\\
        $(\sendCtr_P, \recvCtr_P) \gets t.\ack.(\sendCtr, \recvCtr)$\\
        $\sendCtr_P \gets \sendCtr_P + 1$, $\ack \gets (S, P, \barP, c_f,
            \sendCtr_P, \recvCtr_P)$\\
        $\sendTag \gets (\ack, \mac(\macKey, \ack))$\\
        $\creturn \st_S, \sendTag$
    }{
        \procedurev{$\tagRecv(\servSt, P, c_f, t)$}\\
        $\cif \Pi(t) \neq P \lor \verify(\macKey, t.\ack, t.\tagLabel) = 0$:\\\ind $\creturn \servSt,
        \bot$\\
        $(\sendCtr_P, \recvCtr_P) \gets t.\ack.(\sendCtr, \recvCtr)$\\
        $\recvCtr_P \gets \recvCtr_P + 1$, $\ack \gets (R, \barP, P, c_f,
            \sendCtr_P, \recvCtr_P)$\\
        $\recvTag \gets (\ack, \mac(\macKey, \ack))$\\
        $\creturn \st_S, \recvTag$\\

        \procedurev{$\reportReplay(\servSt, t, t')$}\\
        $\cif \Pi(t) \neq \Pi(t') \cthen \creturn \bot$\\
        $b \gets (\verify(\macKey,\allowbreak t.\ack,\allowbreak t.\tagLabel) =
        1)\land\\\ind(\verify(\macKey,\allowbreak t'.\ack,\allowbreak
        t'.\tagLabel) = 1)\land\\\ind((t.\ack.\sendCtr + t.\ack.\recvCtr) =\\\ind
        (t'.\ack.\sendCtr + t'.\ack.\recvCtr))$\\
        $\cif b = 1 \cthen \creturn \Pi(t)$\\
        $\celse \creturn \bot$
        }
        \caption{Pseudocode for our two-party transcript franking construction
         with outsourced storage. Let $\Pi(t)$ be the sending party if $t$ is a
        sending tag and the receiving party if $t$ is a reception tag. The
        routines $\init$, $\Snd$, $\Rcv$, and $\judge$ remain unchanged relative
        to the pseudocode given in \figref{fig:tf-dm-const}.}
    \label{fig:o-tf-dm-const}
\end{figure*}

\paragraph{Preventing fast-forwards}
When sending a message, a client increments its send counter and appends to the
causality data a server tag for the previous counter. This prevents the client
from incrementing the counter too far into the future, resulting in what we term
a \emph{fast-forward} attack. Doing so would require the client to forge a MAC,
since it would have to produce a valid server tag on a message the server had
not tagged before. Syntactically, this means we modify $\tagRecv$ and $\tagSend$
to take in an additional input $t$, the latest tag issued by the server to party
$P$. At the initialization of a conversation, the server provides each party $P$
with a special starting tag $t_0^{(P)}$, which are additional outputs from
$\srvInit$, where $t_0^{(P)}.\ack = (\texttt{Init}, P, \bot, 0, 0)$. We write
$\tagSend(\servSt, P, c_f, t)$ and $\tagRecv(\servSt, P, c_f, t)$. In our
modified construction, the server first checks that $t$ is a valid tag and
obtains the initial values of the counters as $(\sendCtr_P, \recvCtr_P) \gets
t.\ack.(\sendCtr, \recvCtr)$ instead of retrieving them from its own storage,
for both $\tagSend$ and $\tagRecv$. If the check on $t$ fails,
$\tagRecv$ and $\tagSend$ output $\bot$.

\paragraph{Preventing replays}
If a client attempts to send a message with a repeated past counter, an honest
recipient can report the repeated counters to the server as proof of sender
misbehavior. Such a report provides resistance against rollback attacks for
counters. For message reception tags, we follow the same exact approach as it
applies to receive counters. We add a new procedure to the construction: $P
\gets \reportReplay(\servSt, t, t')$, where $P$ is the party that attempted the
replay. If the provided tags do not constitute proof of a replay, then the
output is $\bot$. In our modified construction, the server returns $P$ if
$(\verify(\macKey,\allowbreak t.\ack,\allowbreak t.\tagLabel) = 1)$,
$(\verify(\macKey,\allowbreak t'.\ack,\allowbreak t'.\tagLabel) = 1)$,
$((t.\ack.\sendCtr + t.\ack.\recvCtr) = (t'.\ack.\sendCtr + t'.\ack.\recvCtr))$,
and $(t.\ack \neq t'.\ack)$, where $P$ is the sending party within a send
acknowledgement or the receiving party of a reception acknowledgement -- if
these are not the same between $t$ and $t'$, then we output $\bot$. To submit a
false replay report framing an honest party would require a client to forge a
MAC. We term the submission of a false replay a \emph{replay framing attack},
for which we provide a game-based definition in \appref{app:outsrc-security}.

\paragraph{Replay reportability}
In addition to ensuring that honest clients cannot be falsely framed for
attempting a replay, we must guarantee that actual replays by malicious clients
are reportable. For a party $P$ that has been issued a server tag $t \in
\tagSp_S \cup \tagSp_R \cup \tagSp_I$, that then generates $t'$ and $t''$ by
invoking $\tagRecv$ and/or $\tagSend$ with the same $t$ given as the previous
tag, $\reportReplay(\servSt,\allowbreak t',\allowbreak t'')$ must output $P$. 
This holds for the construction in \figref{fig:o-tf-dm-const} by the checks
performed in $\judgeReplay$.

\section{Related Work}
\label{sec:relwork}

\paragraph{Message franking}
Message franking has been studied in various settings. Symmetric message
franking provides a reporting solution when the platform houses the moderation
endpoint for receiving reports, and when sender and recipient identities are
known \cite{fbSecretConv,smfComm,invisSalamanders,msgFrChan}. Our own work is
situated within this setting. Asymmetric message franking (AMF) generalizes to
metadata private platforms and allows for third-party moderation \cite{amf,
hecate}. Recent work has also generalized AMFs to group messaging
\cite{groupAMF}. There are also proposed reporting mechanisms built from secret
sharing \cite{reportingSecShare}. All of these works consider message franking
at the single-message level.

\paragraph{Causality in cryptographic protocols}
Prior work has investigated incorporating causality in cryptographic channels
\cite{mar17,groupChannels}. Notably, recent work by Chen and Fischlin has
introduced stronger causality notions and shown how to combine them with message
franking protocols \cite{causalityPres}. However, as we have discussed, their
message franking formalism does not meet our goals for transcript franking, due
to its reliance on client-reported causality information and the inability for
reporters to disclose their own sent messages. See \appref{app:comparison} for
more details. The distributed systems literature has long considered the problem
of ordering events over communication networks via devising notions of logical
time \cite{lamportClock} and distributed snapshots \cite{distSnapshot}.

\paragraph{Cryptographic Abuse Mitigation}
In addition to enabling user-driven reporting, other cryptographic solutions
have been proposed for targeting abuse in encrypted messaging. For messaging
platforms that allow forwarding content, message trace-back is a cryptographic
primitive that allows a platform to determine the origin of harmful content
\cite{traceback,soureTrackingPeale,hecate}. Message franking concerns
user-driven content reporting. Recent work has also considered automated
reporting for messages that match a list of known harmful content
\cite{applePsi}. Such proposals have been met with strong criticism from privacy
advocates. Follow-on work has attempted to navigate the privacy vs. moderation
trade-off through added transparency and placing limitations on what content can
be traced \cite{pubVerifHash,traceIlleg}. Another line of work explores how
cryptography can be used to aid with user-blocking \cite{orca,snarkBlock}.

\section{Conclusion}
\label{sec:conclusion}
Existing treatments of message franking only consider how reporters can disclose
individual messages that they receive. This is insufficient for including
necessary context within reports. Our work provides definitions and
constructions for transcript franking, an extension of message franking
protocols that allows reporting sequences of messages with strong guarantees
over message ordering and contiguity. We generalize our results to multi-party
messaging and show how to securely outsource state to clients, allowing for more
practical deployment. How our techniques can be generalized to asymmetric
message franking, in order to be applicable to metadata-private and third-party
moderation settings, remains an interesting open problem.

\newpage\clearpage\newpage
\bibliographystyle{splncs04}
\bibliography{main.bib}

\begin{thebibliography}{10}
\providecommand{\url}[1]{\texttt{#1}}
\providecommand{\urlprefix}{URL }
\providecommand{\doi}[1]{https://doi.org/#1}

\bibitem{fbSecretConv}
Messenger secret conversations technical whitepaper (2017), \url{https://about.fb.com/wp-content/uploads/2016/07/messenger-secret-conversations-technical-whitepaper.pdf}

\bibitem{signalCheckmarks}
How do i know if my message was delivered or read? (2024), \url{https://support.signal.org/hc/en-us/articles/360007320751-How-do-I-know-if-my-message-was-delivered-or-read}

\bibitem{whatsappReport}
How to block and report someone (2024), \url{https://faq.whatsapp.com/1142481766359885/?helpref=hc_fnav&cms_platform=web}

\bibitem{whatsappCheckmarks}
How to check read receipts (2024), \url{https://faq.whatsapp.com/665923838265756/?helpref=uf_share}

\bibitem{mlsRfc}
Barnes, R., Beurdouche, B., Robert, R., Millican, J., Omara, E., Cohn-Gordon, K.: The messaging layer security (mls) protocol (2024), \url{https://datatracker.ietf.org/doc/rfc9420/}

\bibitem{traceIlleg}
Bartusek, J., Garg, S., Jain, A., Policharla, G.V.: End-to-end secure messaging with traceability only for illegal content. In: Hazay, C., Stam, M. (eds.) Advances in Cryptology -- EUROCRYPT 2023. pp. 35--66. Springer Nature Switzerland, Cham (2023)

\bibitem{hmac}
Bellare, M., Canetti, R., Krawczyk, H.: Keying hash functions for message authentication. In: Koblitz, N. (ed.) Advances in Cryptology --- CRYPTO '96. pp. 1--15. Springer Berlin Heidelberg, Berlin, Heidelberg (1996)

\bibitem{applePsi}
Bhowmick, A., Boneh, D., Myers, S., Talwar, K., Tarbe, K.: The apple psi system (2021), \url{https://www.apple.com/child-safety/pdf/Apple_PSI_System_Security_Protocol_and_Analysis.pdf}

\bibitem{distSnapshot}
Chandy, K.M., Lamport, L.: Distributed snapshots: determining global states of distributed systems. ACM Trans. Comput. Syst.  \textbf{3}(1),  63–75 (Feb 1985). \doi{10.1145/214451.214456}, \url{https://doi.org/10.1145/214451.214456}

\bibitem{rzks}
Chen, B., Dodis, Y., Ghosh, E., Goldin, E., Kesavan, B., Marcedone, A., Mou, M.E.: Rotatable zero knowledge sets. In: Agrawal, S., Lin, D. (eds.) Advances in Cryptology -- ASIACRYPT 2022. pp. 547--580. Springer Nature Switzerland, Cham (2022)

\bibitem{causalityPres}
Chen, S., Fischlin, M.: Integrating causality in messaging channels. In: Joye, M., Leander, G. (eds.) Advances in Cryptology -- EUROCRYPT 2024. pp. 251--282. Springer Nature Switzerland, Cham (2024)

\bibitem{invisSalamanders}
Dodis, Y., Grubbs, P., Ristenpart, T., Woodage, J.: Fast message franking: From invisible salamanders to encryptment. In: Shacham, H., Boldyreva, A. (eds.) Advances in Cryptology -- CRYPTO 2018. pp. 155--186. Springer International Publishing, Cham (2018)

\bibitem{reportingSecShare}
Eskandarian, S.: Abuse reporting for {Metadata-Hiding} communication based on secret sharing. In: 33rd USENIX Security Symposium (USENIX Security 24). pp. 3205--3221. USENIX Association, Philadelphia, PA (Aug 2024), \url{https://www.usenix.org/conference/usenixsecurity24/presentation/eskandarian}

\bibitem{groupChannels}
Eugster, P., Marson, G.A., Poettering, B.: A cryptographic look at multi-party channels. In: 2018 IEEE 31st Computer Security Foundations Symposium (CSF). pp. 31--45 (2018). \doi{10.1109/CSF.2018.00010}

\bibitem{smfComm}
Grubbs, P., Lu, J., Ristenpart, T.: Message franking via committing authenticated encryption. In: Katz, J., Shacham, H. (eds.) Advances in Cryptology -- CRYPTO 2017. pp. 66--97. Springer International Publishing, Cham (2017)

\bibitem{msgFrChan}
Huguenin-Dumittan, L., Leontiadis, I.: A message franking channel. In: Yu, Y., Yung, M. (eds.) Information Security and Cryptology. pp. 111--128. Springer International Publishing, Cham (2021)

\bibitem{hecate}
Issa, R., Alhaddad, N., Varia, M.: Hecate: Abuse reporting in secure messengers with sealed sender. In: 31st USENIX Security Symposium (USENIX Security 22). pp. 2335--2352. USENIX Association, Boston, MA (Aug 2022), \url{https://www.usenix.org/conference/usenixsecurity22/presentation/issa}

\bibitem{groupAMF}
Lai, J., Zeng, G., Huang, Z., Yiu, S.M., Mu, X., Weng, J.: Asymmetric group message franking: Definitions and constructions. In: Hazay, C., Stam, M. (eds.) Advances in Cryptology -- EUROCRYPT 2023. pp. 67--97. Springer Nature Switzerland, Cham (2023)

\bibitem{lamportClock}
Lamport, L.: Time, clocks, and the ordering of events in a distributed system. Commun. ACM  \textbf{21}(7),  558–565 (jul 1978). \doi{10.1145/359545.359563}, \url{https://doi.org/10.1145/359545.359563}

\bibitem{optiks}
Len, J., Chase, M., Ghosh, E., Laine, K., Moreno, R.C.: {OPTIKS}: An optimized key transparency system. In: 33rd USENIX Security Symposium (USENIX Security 24). pp. 4355--4372. USENIX Association, Philadelphia, PA (Aug 2024), \url{https://www.usenix.org/conference/usenixsecurity24/presentation/len}

\bibitem{leontiadis2023private}
Leontiadis, I., Vaudenay, S.: Private message franking with after opening privacy. In: International Conference on Information and Communications Security. pp. 197--214. Springer (2023)

\bibitem{whatsapp-misinfo}
Machado, C., Kira, B., Narayanan, V., Kollanyi, B., Howard, P.: A study of misinformation in whatsapp groups with a focus on the brazilian presidential elections. In: Companion Proceedings of The 2019 World Wide Web Conference. p. 1013–1019. WWW '19, Association for Computing Machinery, New York, NY, USA (2019). \doi{10.1145/3308560.3316738}, \url{https://doi.org/10.1145/3308560.3316738}

\bibitem{parakeet}
Malvai, H., Kokoris-Kogias, L., Sonnino, A., Ghosh, E., Oztürk, E., Lewi, K., Lawlor, S.: Parakeet: Practical key transparency for end-to-end encrypted messaging. Cryptology {ePrint} Archive, Paper 2023/081 (2023). \doi{10.14722/ndss.2023.24545}, \url{https://eprint.iacr.org/2023/081}

\bibitem{signalBlogIceCream}
Marlinspike, M.: Private group messaging (2014), \url{https://signal.org/blog/private-groups/}

\bibitem{mar17}
Marson, G.A.: Real-World Aspects of Secure Channels: Fragmentation, Causality, and Forward Security. Ph.D. thesis, Technische Universit{\"a}t Darmstadt, Darmstadt (2017), \url{http://tuprints.ulb.tu-darmstadt.de/6021/}

\bibitem{coniks}
Melara, M.S., Blankstein, A., Bonneau, J., Felten, E.W., Freedman, M.J.: {CONIKS}: Bringing key transparency to end users. In: 24th USENIX Security Symposium (USENIX Security 15). pp. 383--398. USENIX Association, Washington, D.C. (Aug 2015), \url{https://www.usenix.org/conference/usenixsecurity15/technical-sessions/presentation/melara}

\bibitem{soureTrackingPeale}
Peale, C., Eskandarian, S., Boneh, D.: Secure complaint-enabled source-tracking for encrypted messaging. In: Proceedings of the 2021 ACM SIGSAC Conference on Computer and Communications Security. p. 1484–1506. CCS '21, Association for Computing Machinery, New York, NY, USA (2021). \doi{10.1145/3460120.3484539}, \url{https://doi.org/10.1145/3460120.3484539}

\bibitem{contentObliv}
Pfefferkorn, R.: Content-oblivious trust and safety techniques: Results from a survey of online service providers. Journal of Online Trust and Safety  \textbf{1}(2) (Feb 2022). \doi{10.54501/jots.v1i2.14}, \url{https://tsjournal.org/index.php/jots/article/view/14}

\bibitem{snarkBlock}
Rosenberg, M., Maller, M., Miers, I.: Snarkblock: Federated anonymous blocklisting from hidden common input aggregate proofs. In: 2022 IEEE Symposium on Security and Privacy (SP). pp. 948--965 (2022). \doi{10.1109/SP46214.2022.9833656}

\bibitem{pubVerifHash}
Scheffler, S., Kulshrestha, A., Mayer, J.: Public verification for private hash matching. In: 2023 IEEE Symposium on Security and Privacy (SP). pp. 253--273. IEEE Computer Society, Los Alamitos, CA, USA (may 2023). \doi{10.1109/SP46215.2023.10179349}, \url{https://doi.ieeecomputersociety.org/10.1109/SP46215.2023.10179349}

\bibitem{sokContentModeration}
Scheffler, S., Mayer, J.: {{SoK}}: {{Content Moderation}} for {{End-to-End Encryption}}. Proceedings on Privacy Enhancing Technologies  \textbf{2023}(2),  403--429 (2023). \doi{10.56553/popets-2023-0060}

\bibitem{versa}
Tyagi, N., Fisch, B., Zitek, A., Bonneau, J., Tessaro, S.: Versa: Verifiable registries with efficient client audits from rsa authenticated dictionaries. In: Proceedings of the 2022 ACM SIGSAC Conference on Computer and Communications Security. p. 2793–2807. CCS '22, Association for Computing Machinery, New York, NY, USA (2022). \doi{10.1145/3548606.3560605}, \url{https://doi.org/10.1145/3548606.3560605}

\bibitem{amf}
Tyagi, N., Grubbs, P., Len, J., Miers, I., Ristenpart, T.: Asymmetric message franking: Content moderation for metadata-private end-to-end encryption. In: Advances in Cryptology – CRYPTO 2019: 39th Annual International Cryptology Conference, Santa Barbara, CA, USA, August 18–22, 2019, Proceedings, Part III. p. 222–250. Springer-Verlag, Berlin, Heidelberg (2019). \doi{10.1007/978-3-030-26954-8_8}, \url{https://doi.org/10.1007/978-3-030-26954-8_8}

\bibitem{orca}
Tyagi, N., Len, J., Miers, I., Ristenpart, T.: Orca: Blocklisting in {Sender-Anonymous} messaging. In: 31st USENIX Security Symposium (USENIX Security 22). pp. 2299--2316. USENIX Association, Boston, MA (Aug 2022), \url{https://www.usenix.org/conference/usenixsecurity22/presentation/tyagi}

\bibitem{traceback}
Tyagi, N., Miers, I., Ristenpart, T.: Traceback for end-to-end encrypted messaging. In: Proceedings of the 2019 ACM SIGSAC Conference on Computer and Communications Security. p. 413–430. CCS '19, Association for Computing Machinery, New York, NY, USA (2019). \doi{10.1145/3319535.3354243}, \url{https://doi.org/10.1145/3319535.3354243}

\bibitem{reportingHCI}
Wang, L., Wang, R., Williams-Ceci, S., Menda, S., Zhang, A.X.: "is reporting worth the sacrifice of revealing what i{\textquoteright}ve sent?": Privacy considerations when reporting on {End-to-End} encrypted platforms. In: Nineteenth Symposium on Usable Privacy and Security (SOUPS 2023). pp. 491--508. USENIX Association, Anaheim, CA (Aug 2023), \url{https://www.usenix.org/conference/soups2023/presentation/wang}

\bibitem{whatsapp-users}
WhatsApp: Two billion users -- connecting the world privately. WhatsApp Blog (2020), \url{https://blog.whatsapp.com/two-billion-users-connecting-the-world-privately}

\end{thebibliography}
\newpage
\appendix

\section{Comparison with Causality Preservation}
\label{app:comparison}
Recent work by Chen and Fischlin considers the problem of assuring ordering
integrity for messages within cryptographic channels, as well as extending this
integrity to message franking \cite{causalityPres}. They introduce a security
notion called \emph{causality preservation}, which captures the ability for two
parties to recover a consistent causal dependency graph over the messages they
exchange, even in the presence of a malicious network provider. To achieve
causality preservation, clients self-report the order in which messages were
sent and received via additional metadata. As a result, clients obtain a shared
view of the partial ordering in which messages were sent and received. This
partial ordering is captured by a causality graph, which we describe in
\secref{sec:prelim}. Our work additionally considers the problem of
multi-message franking in the group setting while Chen and Fischlin focus on
two-party messaging.

\paragraph{Overview of MFC with causality preservation} The causality metadata
is incorporated into a message franking scheme, enabling reporting of this order
in addition to the contents of the messages. Such context is valuable as the
ordering and contiguity of messages sent within a conversation can heavily
influence a moderator's interpretation of the reported messages. To illustrate
what this metadata looks like, we briefly recall the causal message franking
channel $\mfcFBCaus$ presented in the Chen-Fischlin paper.

A sender attaches causal metadata consisting of a queue $Q$ and an index $i_R$ to
each sent message. The queue $Q$ contains the actions performed by the sender
that have not yet been acknowledged by the recipient. Messages can be uniquely
identified by their sending index and party. An index with a bar $\bar{i}$
indicates a received message with sending index $i$. Actions recorded in $Q$
simply consist of these indices. Once the recipient indicates the latest message
it has received, the sender removes all elements from $Q$ up to and including
the one associated with that latest message. To communicate this, the index
$i_R$, sent alongside $Q$, indicates the largest message index received by the
sender from the other party, and the value $i_S$ keeps track of the largest
$\bar{i}_R$ received from the other party. Intuitively, the message with sending
index $\bar{i}_R$, now confirmed to have been received by the other party,
contains all elements of $Q$ up to and including the sending action $\bar{i}_R$,
allowing those actions to be safely removed from $Q$. Despite the optimization
that $i_S$ and $i_R$ enable, $Q$ can grow arbitrarily large, and the overall
bandwidth can increase in a quadratic manner if messages are not acknowledged.
In single-message franking, a user reports only messages they have received. In
causality-preserving message franking, the same is true, except these messages
contain metadata about the order in which other messages have been sent and
received.

\paragraph{Reporting self-sent message contents} Recall that the message
franking channel formalism allows you to only report the messages and content
received from the other party. This poses an issue for transcript franking since
we may have to report messages sent by the reporter themselves. Currently, The
Chen-Fischlin construction does not have a solution to this problem. Their
reporting formalism only allows reporting messages received from the other
party. One workaround would be to require an interactive reporting process in
which each party reports the messages of the other party, filling out the
causality graph. Of course, this is less than desirable, especially in the case
where an abusive party refuses to participate. In our solution, we propose a way
in which clients can explicitly acknowledge reception of well-formed messages,
thereby allowing the senders of those messages to independently report them.

\paragraph{Misbehavior by malicious clients} Via $Q$, clients self-report orderings
of message sending and reception events. An issue here is that clients can
self-report arbitrary such orderings, even ones that do not align with how
messages were transmitted through the service provider.
\figref{fig:message_order} illustrates an example for which arbitrary
self-reported messages orderings can enable malicious behavior.

Beyond the ability to deviate from the actual interleaving of messages, a
malicious sender or reporter can lie about messages having been dropped or
delivered out of order. For these reasons, relying on client-reported orderings
is insufficient for reporting sequences of messages.

\paragraph{Attack on prior construction}
The augmented Facebook message franking channel construction presented in
\cite{causalityPres}, $\mfcFBCaus$, allows clients to report message orders that
do not align with the ground truth. We illustrate this via a simple attack that
mirrors the idea presented in \figref{fig:message_order}. In order to provide a
fair comparison, we first describe a natural lifting from the Chen-Fischlin
notion for message franking channels to our setting. To instantiate a
Chen-Fischlin-style message franking channel in our setting, we define a
$\tagRecv$ function that simply returns $\bot$ for the tag. The $\judge$
function is defined in the natural way, by running $\Extr$ repeatedly for all
messages involved in the report in order to construct the full graph.

The adversary $\advA$ chooses Party 0 as the malicious party (this is opposite
of the attack presented in \secref{sec:intro}, but it applies in the same
manner) and issues the following oracle calls. We specify the exact causal
metadata that Party 0 embeds within the messages it sends to achieve this goal
within each send call.
\begin{enumerate}
    \item[] \textbf{Sequence 1:}
    \begin{enumerate}[label=\arabic*.]
        \item $c_1 \getsr \SendTagOracle(0, m_1; Q=(), i_R = -1)$
        \item $\RecvTagOracle(1, c_1)$
        \item $c_2 \getsr \SendTagOracle(1, m_2; Q=(\bar{1}), i_R = 1)$
        \item $\RecvTagOracle(0, c_2)$
        \item $c_3 \getsr \SendTagOracle(0, m_3; Q=(1), i_R = -1)$
        \item $\RecvTagOracle(1, c_3)$
        \item $c_4 \getsr \SendTagOracle(0, m_4; Q=(1, 2, \bar{1}), i_R = 1)$
        \item $\RecvTagOracle(1, c_4)$
    \end{enumerate}
\end{enumerate}

The adversary $\advA$ embeds causality metadata that suggests Party 0
having observed the ordering $(m_1, m_3, m_2, m_4)$. Meanwhile, Party 1
observes the ordering $(m_1, m_2, m_3, m_4)$, which also aligns with what
Party 0 should have observed had it not deviated from the protocol. The
ordering of $\SendTagOracle$ and $\RecvTagOracle$ calls differs from the
ordering specified in the $Q$ sent along with $m_4$. Below, we show the
causality metadata Party 0 would have attached had it followed the protocol
honestly:

\begin{enumerate}
    \item[] \textbf{Sequence 2:}
    \begin{enumerate}[label=\arabic*.]
        \item $c_1 \getsr \SendTagOracle(0, m_1; Q=(), i_R = -1)$
        \item $\RecvTagOracle(1, c_1)$
        \item $c_2 \getsr \SendTagOracle(1, m_2; Q=(\bar{1}), i_R = 1)$
        \item $\RecvTagOracle(0, c_2)$
        \item $c_3 \getsr \SendTagOracle(0, m_3; Q=(\bar{1}), i_R = 1)$
        \item $\RecvTagOracle(1, c_3)$
        \item $c_4 \getsr \SendTagOracle(0, m_4; Q=(\bar{1}, 2), i_R = 1)$
        \item $\RecvTagOracle(1, c_4)$
    \end{enumerate}
\end{enumerate}

As a result, the graph recovered from $\Extr$ differs from the graph
maintained by the security game. The adversary $\advA$ can issue a report, from
either party, indicating the wrong causal ordering, winning with probability 1.

\paragraph{Impossibility result} The attack we just presented works because the
server has no way to tag reception events. Since the message franking channel
presented in \cite{causalityPres} does not enable such tagging, it is impossible
for any scheme within their model to satisfy our transcript integrity security
notion. Intuitively, any scheme that doesn't enable the server to certify when
messages are received requires the recipient party to self-report when reception
events occur. In particular, the attack we just presented generalizes to any
scheme that does not enable the server to tag reception events. Hence, we extend
the message franking model to allow the server to tag reception events, via that
$\tagRecv$ procedure.

\begin{theorem}
    Any message franking channel $\mfc$ that does not tag message reception
    events does not satisfy transcript integrity.
\end{theorem}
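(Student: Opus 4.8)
The plan is to generalize the concrete attack on $\mfcFBCaus$ given just above into a single transcript integrity adversary that works against any channel whose $\tagRecv$ produces no reception tag. Following the lifting described above, I take ``does not tag reception events'' to mean that $\tagRecv$ always returns $\bot$. The first step is to isolate the structural consequence of this missing tag: since receptions are never registered with the server in a binding way, the only server-certified outputs are the send tags produced by $\tagSend$, and these cannot depend on---much less bind---how many messages the sending party has itself received. Consequently, when $\verifyReport$ reconstructs a send vertex, the reception-counter coordinate $\recvCtr$ of that vertex (i.e.\ where the send is interleaved with the sender's own receptions) must be determined by uncertified, client-supplied report data rather than by anything the server attests to.

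With this in hand, I would build the adversary $\advA$ for $\gameTrInt_{\mfc}$ so that it drives the ground-truth graph $G$ into a receive-before-send configuration. Concretely: call $\SendTagOracle(1, c_m)$ to obtain a send tag $\sendTag_m$ for a message $m$ from party $1$; then call $\RecvTagOracle(0, c_m, \sendTag_m)$, so that party $0$ records a reception and $\recvCtr_0 = 1$ in $G$; then call $\SendTagOracle(0, c_{m'})$ for a message $m'$ from party $0$. In $G$ the send vertex for $m'$ therefore carries counters $(S, 1, 1, \cdot)$. The adversary next assembles a report $\reportInfo$ containing $m$ and $m'$ together with their genuine send tags, but with client causality metadata asserting the \emph{opposite} interleaving: that $m'$ was sent before $m$ was received, so that its reconstructed send vertex reads $(S, 1, 0, \cdot)$.

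The key steps are then to verify two claims. First, $\verifyReport(\servSt, \reportInfo) \neq \bot$: every server-certified component of $\reportInfo$ (the send tags) is genuine, and by the structural observation the falsified reception interleaving is exactly the coordinate the scheme has no certified data to check, so verification succeeds and yields a graph $G'$ whose send vertex for $m'$ is $(S, 1, 0, \cdot)$. Second, $\widetilde{G'} \not\subseteq G$, since the reconstructed vertex $(S, 1, 0, \cdot)$ does not occur in $G$, whose corresponding send vertex is $(S, 1, 1, \cdot)$. Feeding $(\reportInfo, \reportInfo)$ to $\ReportOracle$ then sets $\winFlag$, so $\advA$ wins with probability $1$ and no channel of this form can satisfy transcript integrity.

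I expect the main obstacle to be making the first claim rigorous for a scheme of \emph{unknown} internal structure: I must argue, from the syntactic absence of reception tags alone, that there is genuine adversarial freedom in the reported reception interleaving, rather than relying on the particular bookkeeping of $\mfcFBCaus$. The argument is that the two executions---$m'$ sent after, versus before, the reception of $m$---induce identical server state and identical send tags, because the server is never informed of the reception; hence any report accepted under the honest interleaving has a counterpart, differing only in uncertified client metadata, that encodes the false interleaving and is accepted as well. The $Q=(\bar{1})$ versus $Q=(1)$ discrepancy in the concrete attack is precisely one instantiation of this freedom. I would also confirm that the false interleaving still corresponds to a \emph{valid} causality graph, so that $\verifyReport$ has no graph-validity grounds for rejection; this holds because send-before-reception is itself a legitimate ordering, leaving the scheme with no mechanism---short of forging a send tag, which the adversary never does---to distinguish it from the truth.
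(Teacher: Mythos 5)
Your proposal is correct and rests on exactly the same core observation as the paper's proof: when the server is never informed of reception events, its view (state and send tags) is determined solely by the order of $\tagSend$ invocations, so two executions that differ only in where a reception is interleaved with the sender's own sends are indistinguishable to $\judge$, and the reception-counter coordinate of a reconstructed send vertex is therefore unverifiable client input. The only difference is in how the win is extracted. The paper has the adversary flip a coin between the two interleavings (its Sequence 1 vs.\ Sequence 2) and argues that any fixed $\judge$ outputs the correct causality graph with probability at most $1/2$; you instead fix one execution, forge the report that would have been honest in the counterpart execution, and argue it must still verify, which yields advantage $1$ rather than $1/2$. Both arguments share the same implicit reliance on reportability/correctness: a degenerate $\verifyReport$ that always outputs $\bot$ would vacuously satisfy the transcript integrity win condition (which requires $G_1,G_2\neq\bot$), so your claim that the forged report ``must be accepted'' --- like the paper's claim that $\judge$ should output the correct graph --- is really an appeal to the scheme being a correct franking channel; since the paper makes the same tacit assumption, this is not a gap relative to its proof, but it is the one step you should state explicitly if you write this up.
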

\begin{proof}
    We consider Sequence 1 and Sequence 2 as defined above and note that they
    provide a scenario in which the sending events occur in the same order, but
    the reception events occur in a different order. Given that client input
    cannot be trusted to report the ordering of reception events, the $\judge$
    routine needs to somehow recover the causality graph given just information
    about the order in which messages were sent, which it can record upon
    invocation of $\tagSend$. Since Sequences 1 and 2 have the same ordering of
    send operations ordering, but correspond to different causality graphs due
    to the difference in reception ordering, we see that it is impossible for
    $\judge$ to distinguish between these two scenarios in general. In the
    transcript integrity game, the adversary can randomly choose to execute
    Sequence 1 or Sequence 2 with probability $1/2$. For any fixed $\judge$
    routine, the probability of outputting the correct causality graph is at
    most $1/2$. Hence, no scheme that fails to consider reception events can
    achieve our notion of transcript integrity. \qed
\end{proof}

\section{Security for Outsourced-storage Transcript Franking}
\label{app:outsrc-security}
In this section, we elaborate on the security analysis of our outsourced
transcript franking scheme, which we introduced in \secref{sec:discussion}.
Correctness for outsourced transcript franking is captured by the game in
\figref{fig:o-correctness}. We adapt the security notions for
server-side-storage transcript franking to the outsourced setting and prove that
our outsourced scheme achieves security.

\paragraph{Transcript integrity} In \figref{fig:o-tr-integ}, we present our
transcript integrity definition for outsourced transcript franking. The goal of
the adversary is the same as in the non-outsourced transcript integrity game,
except now the adversary must submit valid server tags to generate new ones.
Moreover, the adversary cannot replay tags in a way that is undetected by
$\reportReplay$. 
The advantage of an adversary $\advA$ in the outsourced storage transcript
integrity game is
$$\advantageOTrInt_{\tfScheme}(\advA) = \Pr[\gameOTrInt_{\tfScheme}(\advA) =
1]\;.$$

\paragraph{Replay framing} In \figref{fig:o-frame}, we present a security game
for replay framing. Given two parties that honestly interact in the protocol,
the adversary attempts to generate a replay that is detected by $\reportReplay$,
in effect framing an honest party for attempting to replay a server tag. The
advantage of an adversary $\advA$ in the outsourced storage replay framing game
is
$$\advantageOFrame_{\tfScheme}(\advA) = \Pr[\gameOFrame_{\tfScheme}(\advA) = 1]\;.$$

\begin{figure*}[t]
    \centering
    \fhpages{0.4}{
        \procedurev{$\gameCorr_{\tfScheme}(\advA)$}\\
        $\serverKey \getsr \keySp$,
        $\st_\advA, \channelKey \getsr \advA()$,
        $\winFlag \gets 0$\\
        $\st_S, t_0^{(0)}, t_0^{(1)}\getsr \srvInit(\serverKey)$\\
        $\st_0 \getsr \clInit(0, \channelKey)$,
        $\st_1 \getsr \clInit(1, \channelKey)$\\
        $t_0, t_1 \gets t_0^{(0)}, t_0^{(1)}$\\
        $\R_t, \recvMessages \gets \setelems{}, \setelems{}$\\
        $\advA^\calO(\st_\advA, \channelKey)$\\
        $\creturn \winFlag$\\

        \procedurev{$\calO.\SendTagOracle(P, m)$}
        $(\st_P, c) \getsr \Snd(P, \st_P, m)$\\
        $\st_S, t_P \getsr \tagSend(\st_S, P, c.c_f, t_P)$\\
        $G \gets G + (\texttt{S}, P, m)$\\
        Add $(P, c, t_P)$ to $\R_t$\\
        $\creturn c, t_P$
    }{

        \procedurev{$\calO.\RecvTagOracle(P, c, \sendTag)$}\\
        Assert $(\barP, c.c, \sendTag) \in \R_t$\\
        $\st_P, m, k_f, i \gets \Rcv(P, \st_P, c)$\\
        $\cif m \neq \bot \cthen$\\
        \ind$\st_S, t_P \getsr \tagRecv(\st_S, P, c_f, t_P)$\\
        \ind $G \gets G + (\texttt{R}, P, c.i)$\\
        \ind Add $(P, m, k_f, c_f, t_s, t_r)$ to $\recvMessages$\\
        $\celse$\\
        \ind $\winFlag \gets 1$\\
        $\creturn m, k_f, \sendTag, t_P$\\

        \procedurev{$\calO.\ReportOracle(\reportInfo)$}\\
        Assert $|\reportInfo| > 0$
        $G' = \verifyReport(\servSt, \reportInfo)$\\
        $\cif \reportInfo \subseteq \R_r \land ((G' = \bot) \lor (G' \not\subseteq
        G))$:\\
        \ind $\winFlag \gets 1$
    }
    \caption{The security game for outsourced-storage transcript franking correctness.
    }
    \label{fig:o-correctness}
\end{figure*}

\begin{figure*}[t]
    \centering
    \fhpagesss{0.23}{0.35}{0.35}{
        \procedurev{$\gameOTrInt_{\tfScheme}(\advA)$}\\
        $\serverKey \getsr \keySp$; $\winFlag \gets 0$\\
        $\st_\advA, \channelKey \getsr \advA()$\\
        $\st_S, t_0^{(0)}, t_0^{(1)}\getsr\\\ind \srvInit(\serverKey)$\\
        $\st_0 \getsr \clInit(0, \channelKey)$\\
        $\st_1 \getsr \clInit(1, \channelKey)$\\
        $G, \R_r, \R \gets \varepsilon, \setelems{}, \setelems{}$\\
        $\advA^\calO(\st_\advA, \channelKey, t_0^{(0)}, t_0^{(1)})$\\
        $\creturn \winFlag$
    }{
        \procedurev{$\calO.\SendTagOracle(P, c, t)$}\\
        Assert for all $t_1, t_2 \in \R$,\\\ind $\reportReplay(\servSt, t_1, t_2) = 0$\\
        $\st_S, \sendTag \gets \tagSend(\st_S, P, c.c_f, t)$\\
        $\cif \sendTag = \bot \cthen \creturn \bot$\\
        $G \gets G + (\texttt{S}, P)$\\
        Add $(P, c, \sendTag)$ to $\R_t$, add $\sendTag$ to $\R$\\
        $\creturn \sendTag$
    }{
        \procedurev{$\calO.\RecvTagOracle(P, c, \sendTag, t)$}\\
        Assert $(\barP, c, \sendTag) \in \R_t$\\
        Assert for all $t_1, t_2 \in \R$,\\\ind $\reportReplay(\servSt, t_1, t_2) = 0$\\
        $\st_S, \recvTag \gets \tagRecv(\st_S, P, c.c_f, t)$\\
        $\cif \recvTag = \bot \cthen \creturn \bot$\\
        $G \gets G + (\texttt{R}, P, c.i)$\\
        Add $\recvTag$ to $\R$\\
        $\creturn \recvTag$\\

        \procedurev{$\calO.\ReportOracle(\reportInfo_1, \reportInfo_2)$}\\
        Assert $|\reportInfo_1| > 0$ and $|\reportInfo_2| > 0$\\
        $G_1 \gets \verifyReport(\servSt, \reportInfo_1)$\\
        $G_2 \gets \verifyReport(\servSt, \reportInfo_2)$\\
        $\cif G_1 \neq \bot \land G_2 \neq \bot \land\\
        \ind ((\widetilde{G_1} \not\subseteq G) \lor (\widetilde{G_2} \not\subseteq G)\\\ind \lor (G_1
        \not\approx G_2))$:\\
        \ind $\winFlag \gets 1$
    }
    \caption{The security game for outsourced-storage transcript integrity.}
    \label{fig:o-tr-integ}
\end{figure*}

\begin{figure*}[t]
    \centering
    \fhpagesss{0.23}{0.35}{0.35}{
        \procedurev{$\gameOFrame_{\tfScheme}(\advA)$}\\
        $\serverKey \getsr \keySp$; $\winFlag \gets 0$\\
        $\st_\advA, \channelKey \getsr \advA()$\\
        $\st_S, t_0^{(0)}, t_0^{(1)}\getsr\\\ind \srvInit(\serverKey)$\\
        $\st_0 \getsr \clInit(0, \channelKey)$\\
        $\st_1 \getsr \clInit(1, \channelKey)$\\
        $t_0, t_1 \gets t_0^{(0)}, t_0^{(1)}$\\
        $\R_r, \R, \tagggedMessages \gets \\\ind\setelems{}, \setelems{},
        \setelems{}$\\
        $\advA^\calO(\st_\advA, \channelKey, t_0^{(0)}, t_0^{(1)})$\\
        $\creturn \winFlag$
    }{
        \procedurev{$\calO.\RecvTagOracle(P, c, \sendTag)$}\\
        Assert $(\barP, c, \sendTag) \in \R_t$\\
        Assert $c \not\in \R$\\
        $\st_S, t_P \gets \tagRecv(\st_S, P, c.c_f, t_P)$\\
        $\cif t_P = \bot \cthen \creturn \bot$\\
        Add $c$ to $\R$\\
        $\creturn t_P$\\

    }{
        \procedurev{$\calO.\SendTagOracle(P, c_f)$}\\
        $\st_S, t_P \gets \tagSend(\st_S, P, c.c_f, t_P)$\\
        $\cif t_P = \bot \cthen \creturn \bot$\\
        Add $(P, c, t_P)$ to $\R_t$\
        $\creturn t_P$\\

        \procedurev{$\calO.\ReportReplayOracle(t, t')$}\\
        $P \gets \reportReplay(\servSt, t, t')$\\
        $\cif P \neq \bot$:\\
        \ind $\winFlag \gets 1$
    }

    \caption{The security game for outsourced storage replay framing.
    }
    \label{fig:o-frame}
\end{figure*}

\paragraph{Security proofs} We now provide proofs for the transcript integrity
and reply framing security of our outsourced storage transcript franking scheme.
The following theorems establish the security of our outsourced construction.

\begin{theorem}      \label{thm:o-dm-tr-int}
      Let $\tfScheme$ be our outsourced-storage transcript franking scheme in \figref{fig:o-tf-dm-const}.
          Let $\advA$ be a transcript integrity adversary against $\tfScheme$. Then we give
          EUF-CMA adversaries $\advB$ and $\advC$, and a V-Bind adversary
          $\advD$, 
                such that
          \begin{align*}
                \advantageOTrInt_{\tfScheme}(\advA) \leq \advantageEUFCMA_{\mathsf{MAC}}(\advB)
                 + \advantageEUFCMA_{\mathsf{MAC}}(\advC) + \advantageVBind_{\mathsf{CS}}(\advD)\;.
          \end{align*}
          Adversaries $\advB$, $\advC$, and $\advD$ run in time that of $\advA$
          plus a small overhead.
    \end{theorem}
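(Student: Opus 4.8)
The plan is to follow the template of \thmref{thm:dm-tr-int}, inserting one extra game hop at the front to neutralize the new fast-forward attack surface created by having clients carry the counters. As before, let $\game_0 = \gameOTrInt_{\tfScheme}$, augmented with a set $\R'$ that records, on each successful $\tagSend$ or $\tagRecv$ call, the acknowledgement $\ack$ the server just MAC'd (together with the counters it embedded). The three summands will correspond to three disjoint failure events handled by $\advB$, $\advC$, and $\advD$ respectively.

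First I would define the event $F_0$ that at some invocation of $\tagSend$ or $\tagRecv$ the adversary supplies a ``previous tag'' $t$ passing the internal check $\verify(\macKey, t.\ack, t.\tagLabel) = 1$ yet with $t$ neither an initialization tag nor a server-issued one (so $t.\ack \notin \R'$). Game $\game_1$ aborts when $F_0$ fires, giving $|\Pr[\game_0(\advA)\Rightarrow 1] - \Pr[\game_1(\advA)\Rightarrow 1]| \leq \Pr[F_0]$. The forger $\advB$ simulates $\game_0$, routing every $\mac$/$\verify$ call (including the initialization tags) to its EUF-CMA oracles, and upon $F_0$ outputs $(t.\ack, t.\tagLabel)$, a valid tag on a message the server never tagged, so $\advantageEUFCMA_{\macScheme}(\advB) = \Pr[F_0]$.

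The crux is the bridging lemma that replaces the ``server counters mirror the ground-truth graph'' step of \thmref{thm:dm-tr-int}. In $\game_1$ every previous tag fed to the server is one the server itself issued, so the tags handed to each party $P$ form a chain in which each step increments exactly one of $(\sendCtr_P,\recvCtr_P)$, raising the counter sum by one. The no-replay guards in $\SendTagOracle$ and $\RecvTagOracle$ — which abort whenever two issued tags of one party share a counter sum — force this chain to be a single line rather than a fork, since reusing a stale previous tag would re-create an already-used counter sum and be flagged by $\reportReplay$. Hence each $(\sendCtr,\recvCtr)$ value for $P$ is realized by at most one issued $\ack$, the $c_f$ attached to a given pair of counters is unique, and these counters coincide with the ones the ground-truth graph assigns at the matching step. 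I expect this to be the main obstacle: it is exactly this single-$c_f$ property — free when the server stores the counters — that the binding reduction below relies on, and establishing it requires carefully arguing that the no-replay restriction rules out two legitimately tagged but counter-colliding commitments (which would otherwise let $\advA$ win via $G_1 \not\approx G_2$ with no forgery and no binding break).

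With the lemma in hand the remaining two hops are the outsourced analogues of the two cases in \thmref{thm:dm-tr-int}. Let $F_1$ be the event that $\advA$ wins with $\widetilde{G_1}\not\subseteq G$ or $\widetilde{G_2}\not\subseteq G$; by the lemma, were all reported acks in $\R'$ their vertices would already lie in $G$, so some reported $\sendTag$ or $\recvTag$ has $\ack \notin \R'$ yet verifies (as $\verifyReport \neq \bot$), yielding a second EUF-CMA forger $\advC$ with $\advantageEUFCMA_{\macScheme}(\advC) = \Pr[F_1]$. Let $F_2$ be the complementary win, $G_1 \not\approx G_2$ with $\widetilde{G_1},\widetilde{G_2}\subseteq G$; since $\widetilde{G_1}\approx\widetilde{G_2}$, \lemmaref{lemma:consistency} supplies $v_1 \in G_1$ and $v_2 \in G_2$ with $v_1[0:2] = v_2[0:2]$ but $v_1.m \neq v_2.m$, and the single-$c_f$ property gives one commitment $c_f$ with openings $(v_1.m, k_f^{(1)})$ and $(v_2.m, k_f^{(2)})$, so $\advD$ outputs $(v_1.m, k_f^{(1)}, v_2.m, k_f^{(2)}, c_f)$ and $\advantageVBind_{\commScheme}(\advD) = \Pr[F_2]$. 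As these two cases exhaust all wins in $\game_1$, we have $\Pr[\game_1(\advA)\Rightarrow 1] \leq \Pr[F_1] + \Pr[F_2]$, and summing the three hops delivers the claimed bound.
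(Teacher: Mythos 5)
Your proposal takes essentially the same route as the paper's proof: an initial game hop that charges the acceptance of a client-supplied ``previous tag'' not issued by the server to EUF-CMA security (the paper phrases the abort as $G.(\sendCtr_P,\recvCtr_P) \neq \servSt.(\sendCtr_P,\recvCtr_P)$ and argues this forces a forged tag, which is the same event as your $F_0$), followed by the two cases of \thmref{thm:dm-tr-int} handled by a second MAC forger and a binding adversary. If anything you are more explicit than the paper, whose proof ends with ``the rest proceeds similarly'': the bridging lemma you single out --- that the $\reportReplay$-based guards keep each party's tag chain linear, so issued counters mirror the ground-truth graph and each counter pair carries a unique $c_f$ --- is exactly the step the paper leaves implicit, and is indeed where the care is needed.
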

\begin{proof}
      We proceed via a sequence of game hops. Define $\game_0$ to be the same as
      $\gameOTrInt_{\tfScheme}$ with the additional bookkeeping as defined in
      the proof of \thmref{thm:dm-tr-int}. Let $\game_1$ be the same, except we
      abort if at any point $G.(\sendCtr_P, \recvCtr_P) \neq
      \servSt.(\sendCtr_P, \recvCtr_P)$. Let $F_1$ denote this event. We have
      $|\Pr[\game_0(\advA) \Rightarrow 1] - \Pr[\game_1(\advA) \Rightarrow 1]|
      \leq \Pr[F_1]$, and we will construct $\advB$ such that
      $\advantageEUFCMA_{\macScheme}(\advB) = \Pr[F_1]$. We have that $\advB$
      simulates $\game_0$ to $\advA$ while routing $\mac$ and $\verify$ calls to
      its challenger oracles. Observe that the only way for $F_1$ to occur is
      for $\advA$ to produce a $t^* \not\in \R$, which means that $t^*$ was
      never queried to the $\mac$ oracle, hence $\advB$ outputs it as a forgery.
      
      The rest of the proof proceeds similarly to the proof of
      \thmref{thm:dm-tr-int}.\qed
\end{proof}

\begin{theorem}      \label{thm:o-fr}       Let $\tfScheme$ be our
      outsourced-storage transcript franking scheme in
      \figref{fig:o-tf-dm-const}. Let $\advA$ be a replay framing adversary
      against $\tfScheme$. Then we give an EUF-CMA adversary $\advB$  such
      that
          \begin{align*}
                \advantageOFrame_{\tfScheme}(\advA) \leq \advantageEUFCMA_{\mathsf{MAC}}(\advB)\;.
          \end{align*}
          Adversary $\advB$ runs in time that of $\advA$ plus a small overhead.
    \end{theorem}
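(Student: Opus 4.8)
The plan is to reduce replay framing directly to EUF-CMA security of the MAC, following the template of \thmref{thm:o-dm-tr-int} but needing only a single reduction (no commitment binding). I would have $\advB$ run $\advA$ and simulate $\gameOFrame_{\tfScheme}$ faithfully, except that every invocation of $\mac(\macKey, \cdot)$ inside $\srvInit$, $\tagSend$, and $\tagRecv$ is replaced by a query to $\advB$'s tagging oracle, and every $\verify(\macKey, \cdot, \cdot)$ inside $\tagSend$, $\tagRecv$, and $\reportReplay$ is replaced by a query to $\advB$'s verification oracle. Since $\advB$ never handles $\macKey$ in the clear this is a perfect simulation. Alongside it, $\advB$ records the set $\mathcal{L}$ of all acks it submits to its tagging oracle; these are exactly the acks carried by honestly issued tags, including the two initialization tags $t_0^{(0)}, t_0^{(1)}$ produced in $\srvInit$.

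The crux is an invariant I would prove by induction on the number of oracle calls: \emph{for each party $P \in \bits$, the acks of any two distinct tags issued to $P$ have distinct counter-sums $\ack.\sendCtr + \ack.\recvCtr$.} The base case is the single tag $t_0^{(P)}$, with counter-sum $0$. For the inductive step, note that both $\SendTagOracle$ and $\RecvTagOracle$ invoke $\tagSend$ or $\tagRecv$ with the game-maintained latest tag $t_P$ as the previous-tag argument, never an adversarially chosen one. Because $t_P$ is always a previously issued, hence MAC-valid and check-passing, tag for $P$, the call does not abort to $\bot$; it reads $(\sendCtr_P, \recvCtr_P)$ from $t_P.\ack$ and increments exactly one of the two counters, so the new tag's counter-sum is precisely that of $t_P$ plus one. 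Hence the tags issued to $P$ form a single chain with counter-sums $0, 1, 2, \ldots$, so there is at most one honest ack for $P$ per counter-sum value. In particular, send tags and reception tags for $P$ share this one chain, so a send tag and a reception tag for the same party can never collide on counter-sum.

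With the invariant in hand, suppose $\advA$ wins, i.e., some query $\ReportReplayOracle(t, t')$ causes $\reportReplay(\servSt, t, t')$ to output a party $P \neq \bot$. By definition of $\reportReplay$ this requires $\Pi(t) = \Pi(t') = P$, both $\verify(\macKey, t.\ack, t.\tagLabel) = 1$ and $\verify(\macKey, t'.\ack, t'.\tagLabel) = 1$, equal counter-sums, and $t.\ack \neq t'.\ack$. I claim at least one of $t.\ack, t'.\ack$ lies outside $\mathcal{L}$: if both were in $\mathcal{L}$, then since $\Pi(t) = \Pi(t') = P$ they would both be honest acks for $P$ with the same counter-sum, forcing $t.\ack = t'.\ack$ by the invariant, contradicting $t.\ack \neq t'.\ack$. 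Thus $\advB$ takes whichever tag carries an ack not in $\mathcal{L}$ and outputs the corresponding $(\ack, \tagLabel)$ pair: it verifies yet its message was never queried to the tagging oracle, so it is a valid forgery. Since every win of $\advA$ yields such a forgery, $\advantageOFrame_{\tfScheme}(\advA) \leq \advantageEUFCMA_{\macScheme}(\advB)$, with $\advB$ running in the time of $\advA$ plus small overhead.

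The main obstacle I anticipate is making the invariant airtight: one must verify that the game never feeds an adversarially controlled previous tag into $\tagSend$/$\tagRecv$ (so the per-party counter-sum chain really is honest and strictly increasing) and that the disjoint increments of the send and reception counters do not produce two honest tags sharing a counter-sum. A secondary, but essential, point is that the argument relies on $\reportReplay$ rejecting the degenerate case $t.\ack = t'.\ack$ as specified in the construction; without that check an adversary could trivially ``frame'' a party by replaying a single honest tag against itself, and the bound would fail.
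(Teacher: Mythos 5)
Your reduction is correct and takes essentially the same route as the paper's proof: a direct EUF-CMA reduction in which $\advB$ simulates $\gameOFrame_{\tfScheme}$ via its tagging and verification oracles and outputs as a forgery whichever of $t.\ack$, $t'.\ack$ was never queried. Your counter-sum invariant supplies the justification the paper merely asserts (that no two honestly issued tags can trigger $\reportReplay$), and your closing caveat is well taken: the $t.\ack \neq t'.\ack$ check appears in the prose description of the construction but is missing from the pseudocode in \figref{fig:o-tf-dm-const}, and without it the claim fails, since submitting $(t,t)$ for any single honest tag $t$ would already frame $\Pi(t)$.
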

\begin{proof}
      Our adversary $\advB$ perfectly simulates $\gameOFrame_{\tfScheme}$ to
      $\advA$ while routing calls to $\mac$ and $\verify$ to its own challenger
      oracles. For any two tags $t_1, t_2$ output by the $\mac$ oracle, we must
      have that $\reportReplay(\servSt, t_1, t_2) = \bot$, since these tags are
      the output of the honest $\tagRecv$ and $\tagSend$ procedures. Therefore,
      if $\advA$ wins, it must have produced a pair $(t, t')$ where at least one
      of these tags was not output by the $\mac$ challenger oracle. Let $t^*$ be
      that tag. The adversary $\advB$ outputs $t^*$ and wins with the same
      probability that adversary $\advA$ wins. \qed
\end{proof}

\paragraph{Group outsourced transcript franking} The construction and analysis
we provide for outsourced two-party transcript franking generalizes naturally to
the $N$-party group messaging setting. We sketch the necessary modifications
here. As with the two-party outsourced setting, we have that $\srvInit$, in
addition to $\servSt$, outputs a list of initial tags $t_0^{(0)}, \ldots,
t_0^{(N-1)}$. The server-side tagging procedures $\tagSend$ and $\tagRecv$
accept an additional argument $t$ for the previous tag issued to a party. We
include an additional procedure $\judgeReplay(\servSt, t, t')$, which outputs a
party $P$ if $(t, t')$ constitute a replay attack by $P$, or $\bot$ otherwise.
We outline our outsourced group transcript franking construction in
\figref{fig:o-tf-gm-const}.

\begin{figure*}[t]
    \centering
    \fhpages{0.47}{
        
    \procedurev{$\srvInit(N)$}\\
        $\macKey \getsr \keySp$\\
        For $P \in [N]$\\
        \ind $t_0^{(P)}.\ack = (\texttt{Init}, P, \bot, 0, 0)$\\
        \ind $t_0^{(P)}.\tagLabel = \mac(\macKey, t_0^{(P)}.\ack)$\\
        \ind $\sendCtr_i, \recvCtr_i \gets 0, 0$\\
        $\creturn \{\macKey\} \cup \setelems{\sendCtr_i, \recvCtr_i}_{i\in[N]}$,
        $\setelems{t_0^{(i)}}_{i\in[N]}$\\

        \procedurev{$\tagSend(\servSt, P, c_f, t)$}\\
        $\cif \Pi(t) \neq P \lor \verify(\macKey, t.\ack, t.\tagLabel) =
        0$:\\\ind $\creturn \servSt, \bot$\\
        $(\sendCtr_P, \recvCtr_P) \gets t.\ack.(\sendCtr, \recvCtr)$\\
        $\sendCtr_P \gets \sendCtr_P + 1$, $\ack \gets (S, P, c_f,
            \sendCtr_P, \recvCtr_P)$\\
        $\sendTag \gets (\ack, \mac(\macKey, \ack))$\\
        $\creturn \st_S, \sendTag$
    }{
        \procedurev{$\tagRecv(\servSt, P_R, P_S, c_f, t)$}\\
        $\cif \Pi(t) \neq P_R \lor \verify(\macKey, t.\ack, t.\tagLabel) =
        0$:\\\ind $\creturn \servSt, \bot$\\
        $(\sendCtr_{P_R}, \recvCtr_{P_R}) \gets t.\ack.(\sendCtr, \recvCtr)$\\
        $\recvCtr_{P_R} \gets \recvCtr_{P_R} + 1$\\
        $\ack \gets (R, P_S, P_R, c_f,
            \sendCtr_{P_R}, \recvCtr_{P_R})$\\
        $\recvTag \gets (\ack, \mac(\macKey, \ack))$\\
        $\creturn \st_S, \recvTag$\\

        \procedurev{$\reportReplay(\servSt, t, t')$}\\
        $\cif \Pi(t) \neq \Pi(t') \cthen \creturn \bot$\\
        $b \gets (\verify(\macKey,\allowbreak t.\ack,\allowbreak t.\tagLabel) =
        1)\land\\\ind(\verify(\macKey,\allowbreak t'.\ack,\allowbreak
        t'.\tagLabel) = 1)\land\\\ind((t.\ack.\sendCtr + t.\ack.\recvCtr) =\\\ind
        (t'.\ack.\sendCtr + t'.\ack.\recvCtr))$\\
        $\cif b = 1 \cthen \creturn \Pi(t)$\\
        $\celse \creturn \bot$
        }
        \caption{Pseudocode for our $N$-party transcript franking construction
         with outsourced storage. Let $\Pi(t)$ be the sending party if $t$ is a
         sending tag and the receiving party if $t$ is a reception tag.}
    \label{fig:o-tf-gm-const}
\end{figure*}

\end{document}